\newcommand {\ignore} [1] {}
\begin{document}

\title{A $1.75$ LP approximation for the \\ Tree Augmentation Problem}

\author{
Guy Kortsarz\inst{1} 
\and 
Zeev Nutov\inst{2}} 
\institute{
Rutgers University--Camden, NJ. \email{guyk@camden.rutgers.edu}
\and The Open University of Israel. \email{nutov@openu.ac.il.}}

\maketitle

\begin{abstract} 
In the {\sf Tree Augmentation Problem} ({\sf TAP}) the goal is to augment a tree $T$ by a minimum size 
edge set $F$ from a given edge set $E$ such that $T \cup F$ is $2$-edge-connected.
The best approximation ratio known for {\sf TAP} is $1.5$. 
In the more general {\sf Weighted TAP} problem, $F$ should be of minimum weight.
{\sf Weighted TAP} admits several $2$-approximation algorithms w.r.t. to the standard cut LP-relaxation,
but for all of them the performance ratio of $2$ is tight even for {\sf TAP}.
The problem is equivalent to the problem of covering a laminar set family.
Laminar set families play an important role in the design of approximation algorithms for 
connectivity network design problems. 
In fact, {\sf Weighted TAP} is the simplest connectivity network design
problem for which a ratio better than $2$ is not known.
Improving this ``natural'' ratio is a major open problem,
which may have implications on many other network design problems.
It seems that achieving this goal requires finding an LP-relaxation with integrality 
gap better than $2$, which is a long time open problem even for {\sf TAP}.
In this paper we introduce such an LP-relaxation and give an algorithm that 
computes a feasible solution for {\sf TAP} of size at most $1.75$ times the optimal LP value. 
This gives some hope to break the ratio $2$ for the weighted case. 
Our algorithm computes some initial edge set by solving a partial system of constraints 
that form the integral edge-cover polytope, and then applies local search on $3$-leaf subtrees 
to exchange some of the edges and to add additional edges.  
Thus we do not need to solve the LP, 
and the algorithm runs roughly in time required to find a minimum weight edge-cover in a general graph. 
\end{abstract}

\section{Introduction} \label{s:intro}

\subsection{Problem definition and related problems}

A graph (possibly with parallel edges) is {\em $k$-edge-connected} if there are $k$ pairwise
edge-disjoint paths between every pair of its nodes.
We study the following fundamental connectivity augmentation problem:
given a connected undirected graph $G=(V,{\cal E})$ 
and a set of additional edges (called ``links") $E$ on $V$ disjoint to ${\cal E}$, 
find a minimum size edge set $F \subseteq E$ such that $G+F=(V,{\cal E} \cup F)$ is $2$-edge-connected.
The $2$-edge-connected components of the given graph $G$ form a tree.
It follows that by contracting these components, one may assume that
$G$ is a tree.  Hence, our problem is:

\begin{center} \fbox{\begin{minipage}{0.965\textwidth} \noindent
{\sf Tree Augmentation Problem (\sf TAP)} \\
{\em Instance:}  \ A tree $T=(V,{\cal E})$ and a set of links $E$ on $V$ disjoint to ${\cal E}$. \\
{\em Objective:} Find a minimum size subset $F \subseteq E$ of links such that $T \cup F$ is $2$-edge-connected.
\end{minipage}}\end{center}

{\sf TAP} can be formulated as the problem of covering a laminar set family as follows.
Root $T$ at some node $r$.
Every edge of $T$ partitions $T$ into two parts $T'$ and $T \setminus T'$, where $r \notin T'$;
let $\hat{\cal E}$ denote the set family obtained by picking for each edge the part $T'$ that does not contain $r$.
Then $\hat{\cal E}$ is laminar, and $F \subseteq E$ is a feasible solution for {\sf TAP} if and only if 
$F$ covers $\hat{\cal E}$, namely, for every $T' \in \hat{{\cal E}}$ there is a link in $F$ from $T'$ to $T \setminus T'$. 
{\sf TAP} is also equivalent to the problem of augmenting the edge-connectivity
from $k$ to $k+1$ for any odd $k$;
this is since the family of minimum cuts of a $k$-connected graph with $k$ odd is laminar.

In the more general {\sf Weighted TAP} problem, 
the links in $E$ have weights $\{w_e:e \in E\}$ and the goal is to find a minimum weight augmenting 
edge set $F \subseteq E$ such that $T \cup F$ is $2$-edge connected.
Even a more general problem is the {\sf $2$-Edge-Connected Subgraph} problem,
where the goal is to find a spanning $2$-edge-connected subgraph of a given weighted graph;
{\sf Weighted TAP} is a particular case, when the input graph contains a connected spanning subgraph of 
cost zero.

\subsection{Previous and related work}

{\sf TAP} is NP-hard even for trees of diameter $4$ \cite{FJ},
or when the set $E$ of links forms a cycle on the leaves of $T$ \cite{CJR}.
The first $2$-approximation for {\sf Weighted TAP} was given 24 years ago in 1981 by 
Fredrickson and J\'{a}j\'{a} \cite{FJ}, and was simplified later by Khuller and Thurimella \cite{KT}.
These algorithms reduce the problem to the {\sf Min-Cost Arborescence} problem, 
that is solvable in polynomial time \cite{CL,E}.
The primal-dual algorithm of \cite{GW,GGPS} is another combinatorial $2$-approximation algorithm for the problem. 
The iterative rounding algorithm of Jain \cite{Jain} is an LP-based $2$-approximation algorithms.
The approximation ratio of $2$ for all these algorithms is tight even for {\sf TAP}.
These algorithms achieve ratio $2$ w.r.t. to the standard {\em cut LP} that seeks 
to minimize $\sum_{e \in E} w_ex_e$ over the following polyhedron:
\begin{equation*} 
\left\{x \in \mathbb{R}^E: x(\delta(A)) \geq 1 \ \forall A \in \hat{{\cal E}}, 
x_e \geq 0 \ \forall e \in E \right\} \ .
\end{equation*}
Here $\delta(A)$ is the set of links with exactly one endnode in $A$ and $x(F)=\sum_{e \in F} x_e$
is the sum of the variables indexed by the links in $F$.

Laminar set families play an important role in the design and analysis of exact and approximation algorithms for 
network design problems, both in the primal-dual method and the iterative rounding method, c.f. \cite{LRS,GW}. 
{\sf Weighted TAP} is the simplest network design problem for which a ratio better than $2$ is not known.
Breaking the ``natural'' ratio of $2$ for {\sf Weighted TAP} is a major open problem in network design,
which may have implications on many other problems, c.f. the surveys \cite{K,KN}.

As a starting point, Khuller \cite{K} in his survey on high connectivity network design problems 
posed as a major open question achieving ratio better than $2$ for {\sf TAP}.
This open question was resolved by Nagamochi \cite{N}, that used a novel lower bound
to achieve ratio $1.875+\epsilon$ for {\sf TAP}.
Building on the lower bound idea of Nagamochi \cite{N}, the sequence of papers \cite{EFKN-APPROX,EFKN-TALG,KN-TAP} 
introduced additional new techniques to achieve ratio $1.8$ by a much simpler algorithm and analysis,
and also achieved the currently best known ratio $1.5$ by a more complex algorithm.

Several algorithms for {\sf Weighted TAP} with ratio better than $2$ are known for special cases.
Cheriyan, Jord\'{a}n, and Ravi \cite{CJR} 
showed how to round a half-integral solution to the cut LP within ratio $4/3$.
However, as is pointed in \cite{CJR}, 
there are {\sf TAP} instances that do not have an LP optimal solutions which is half integral.
In \cite{CN} is given an algorithm with ratio $(1+\ln 2)$ and running time $n^{f(D)}$ where $D$ is 
the diameter of $T$. 

Studying various LP-relaxations for {\sf TAP} is motivated by the hope
that these may lead to breaking the ratio of $2$ for {\sf Weighted TAP}.
Thus several paper ana\-lyzed integrality gaps of LP relaxations for the problem.
Cheriyan, Karloff, Khandekar, and Koenemann \cite{CKKK} showed that the integrality gap of the 
standard cut LP is at least $1.5$ even for {\sf TAP}.
For the special case of {\sf TAP} when every link connects two leaves, 
\cite{MN} obtained ratios $5/3$ w.r.t. the cut LP, ratio $3/2$ w.r.t. to a strengthened ``leaf edge-cover'' LP,
and ratio $17/12$ not related to any LP. 
However, the analysis of \cite{MN} does not extend to the general {\sf TAP}.
In this paper, with the help of some ideas from \cite{MN,EFKN-TALG,KN-TAP,CGLS}, 
we introduce a new LP-relaxation and prove that its integrality gap at most $7/4$ for {\sf TAP}. 
This gives some hope to break the ratio $2$ for the weighted case.

Finally, we mention some work on the closely related {\sf $2$-Edge-Connected Subgraph} problem.
This problem was also vastly studied. For general weights, the best known ratio is $2$ by
Fredrickson and J\'{a}j\'{a} \cite{FJ}, which can also be achieved by the algorithms 
in \cite{KT} and \cite{Jain}. For particular cases, better ratios are known. 
Fredrickson and J\'{a}j\'{a} \cite{FJ-TSP} showed that when the edge weights
satisfy the triangle inequality, the Christofides heuristic has ratio $3/2$.
For the special case when all the edges of the input graph have unit weights
(the ``min-size'' version of the problem), the currently best known ratio is $4/3$ \cite{SV}. 

\section{A new LP-relaxation for {\sf TAP}}

In this section we introduce a new LP-relaxation for {\sf TAP},
and in subsequent section prove a $7/4$ integrality gap (for the unweighted case). 
This LP-relaxation combines ideas from \cite{MN,EFKN-TALG,KN-TAP,CGLS}, but also 
introducing new crucial valid constraints. 
Later, we will also introduce a technique of exchanging fractional values between the edges. 
We are not aware of such methods used to prove an integrality gap. 
We need some definition to introduce our LP.


For $u,v \in V$ let $(u,v)\in {\cal E}$ denote the edge in $T$ and 
$uv$ the link in $E$ between $u$ and $v$. 
A link $uv$ {\em covers} all the edges along the path $P(uv)$.
The choice of the root $r$ defines a partial order on $V$: $u$ is a {\em descendant} of $v$ 
(or $v$ is an {\em ancestor} of $u$) if $v$ belongs to $P(ru)$;
if, in addition, $(u,v) \in T$, then $u$ is a {\em child} of
$v$, and $v$ is the {\em parent} of $u$.  
The {\em leaves} of $T$ are the nodes in $V \setminus \{r\}$ that have no descendants. 
We denote the leaf set of $T$ by $L(T)$, or
simply by $L$, when the context is clear.  
The {\em rooted subtree} of $T$ induced by $r'$ and its descendants is
denoted by $T_{r'}$ ($r'$ is the root of $T_{r'}$). A subtree $T'$
of $T$ is called a {\em rooted subtree} of $T$ if $T'=T_{r'}$ for some $r' \in V$. 

\begin{definition} [shadow, shadow-minimal cover] \label{d:shadow}
Let $P(uv)$ denote the path between $u$ and $v$ in $T$.
A link $u'v'$ is a {\em shadow} of a link $uv$ if $P(u'v') \subseteq P(uv)$.
A cover $F$ of $T$ is {\em shadow-minimal} if for every link
$uv \in F$ replacing $uv$ by any proper shadow of $uv$ results in a set of links that does
not cover $T$.
\end{definition}

Every {\sf TAP} instance can be rendered closed under shadows by adding all shadows of existing links.  
We refer to the addition of all shadows as {\em shadow completion}.  
Shadow completion does not affect the optimal solution size, since every shadow can be replaced by some
link covering all edges covered by the shadow. Thus we may assume that
the set of links $E$ is closed under shadows.

\begin{definition} [twin-link, stem] \label{d:twin}
A link between two leaves $a,b$ of $T$ is a {\em twin-link} if its contraction results in a new leaf;
$a,b$ are called twins and their least common ancestor is called a {\em stem}.
Let $W$ denote the set of twin links, and for $e \in W$ let $s_e$ denote the stem of a twin-link $e$.
\end{definition}

For $A,B \subseteq V$ and $F \subseteq E$ let 
$\delta_F(A,B)$ denote the set of links in $F$ with one end in $A$ and the other end in $B$, and let
$\delta_F(A)=\delta_F(A,V \setminus A)$ denote the set of links in $F$ with exactly one endnode in $A$.
The default subscript in the above notation is $E$.
Let $L$ denote the set of leaves of $T$ and let 
$${\cal O}_L=\{A \subseteq V:|A \cap L| \mbox{ is odd}\} \ .$$
For a function $x$ on $E$ and $F \subseteq E$ let $x(F)=\sum_{e \in F}x_e$.

Let $\Pi$ be the polyhedron defined by the following set of linear constraints:
\begin{eqnarray}
x(\delta(A))       & \geq & 1                            \hphantom{aaaaaaaaaaaa} \forall A \in \hat{{\cal E}}                          \label{e:1} \\
x(\delta(A,V))        & \geq & \left\lceil |A \cap L|/2 \right\rceil   \hphantom{aaa} \forall A \in {\cal O}_L 
         \label{e:3} \\
x(\delta(v))       &   =  & 1                            \hphantom{aaaaaaaaaaaa} \forall v \in L                                 \label{e:2} \\
x_e-x(\delta(s_e)) &   =  & 0                            \hphantom{aaaaaaaaaaaa} \forall e \in W                                 \label{e:4} \\
x_e                & \geq & 0                            \hphantom{aaaaaaaaaaaa} \forall e \in E                                 \label{e:5}
\end{eqnarray} 
Inequalities (\ref{e:1}) and (\ref{e:5}) are the constraints of a standard LP-relaxation for {\sf TAP}.
Inequality (\ref{e:3}) were used in \cite{MN} to establish integrality gap of $1.5$ for
the special case of {\sf TAP} when every link connects two leaves.
We add over it the constraints (\ref{e:2}) and (\ref{e:4}), which are crucial to obtain integrality gap
better than $2$ for {\sf TAP}. 

Now we explain why the above LP is a relaxation for {\sf TAP}.

\begin{definition} [exact cover]
An edge set $F$ is an {\em exact cover} of $L$ if $|\delta_F(v)|=1$ for all $v \in L$.
\end{definition}

\begin{lemma} [\cite{KN-TAP}] \label{l:a} 
Given a {\sf TAP} instance with shadow completion, let $F$ be an optimal shadow-minimal solution 
with $|F \cap W|$ maximal. Then $F$ is an exact cover of $L$, and for any $e =ab \in W$, 
either $e \in F$ and $|\delta_F(s_e)|=1$, or $e \notin F$ and $|\delta_F(s_e)|=0$.
\qed
\end{lemma}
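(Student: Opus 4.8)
The plan is to take an optimal shadow-minimal solution $F$ with $|F\cap W|$ maximal, and prove the two assertions by a sequence of local exchange arguments, each of which either preserves $|F|$ while not decreasing $|F\cap W|$, or derives a contradiction with shadow-minimality. First I would establish that $F$ is an exact cover of $L$: for a leaf $v$, some link of $F$ must cover the pendant edge $(v,\mathrm{parent}(v))$, so $|\delta_F(v)|\ge 1$. If $|\delta_F(v)|\ge 2$, pick two links $e_1=vu_1$, $e_2=vu_2$ of $F$ incident to $v$; since both contain the edge at $v$ and the edges incident to a leaf form a nested chain along $P(rv)$, one of $P(e_1),P(e_2)$ is contained in the union of the other and the remaining links of $F$. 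More carefully, replace $e_1$ by its shadow obtained by moving the endpoint $v$ up to the deepest node still needed; shadow-minimality says this shadow cannot cover everything $e_1$ did that is not covered by $F\setminus e_1$, but because $e_2$ already covers the edge at $v$, the only edges $e_1$ uniquely covers lie strictly above $v$, so such a proper shadow does exist — contradiction. Hence $|\delta_F(v)|=1$ for every $v\in L$.

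Next I would handle the twin-link/stem statement. Fix $e=ab\in W$ with stem $s=s_e$. The two pendant edges $(a,\cdot)$ and $(b,\cdot)$ together with the edge just above $s$ are the relevant edges. Since $F$ is an exact cover of $L$, exactly one link $e_a\in F$ covers the edge at $a$ and exactly one link $e_b\in F$ covers the edge at $b$. I would argue by cases on whether $e_a=e_b$. If $e_a=e_b$, then this single link covers both pendant edges, so it is a link between $a$ and $b$ or a link whose path contains both — by shadow-minimality and the definition of a twin-link (contracting $e$ creates a new leaf, so $a$ and $b$ have no other children and $s$ has exactly the two children leading to $a$ and $b$), one checks that this link must be $e$ itself, so $e\in F$; moreover any other link of $F$ meeting $s$ or above would be redundant with $e$ for the edge above $s$ unless it is forced, and exactness at the leaves plus shadow-minimality forces $|\delta_F(s)|=1$ (only $e$ crosses $s$). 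If $e_a\neq e_b$, then $e_a$ goes from $a$ to some node outside $T_a$ (hence strictly above $s$, since $s$'s only other subtree is $T_b$ and $e_a\neq e_b$ would make $e_a$ a link to $b$ or its descendants $=$ just $b$), likewise $e_b$; so both $e_a,e_b$ cross $s$ upward, giving $|\delta_F(s)|\ge 2$. But then I can exchange: replace $\{e_a,e_b\}$ by $\{e, e'\}$ where $e'$ is a suitable shadow of $e_a$ (or $e_b$) with its lower endpoint at $s$; this keeps the cover feasible (the edge above $s$ is still covered by $e'$, the pendant edges by $e$), keeps $|F|$ the same, and strictly increases $|F\cap W|$ by adding $e$ — contradicting maximality of $|F\cap W|$. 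Hence when $e\notin F$ we cannot have two links of $F$ crossing $s$ through $e_a\neq e_b$ in this configuration, and a short check gives $|\delta_F(s)|=0$.

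The main obstacle I expect is the bookkeeping in the exchange step for the $e_a\neq e_b$ case: one must verify that after substituting a shadow of $e_a$ ending at $s$, together with $e$, nothing that was previously covered becomes uncovered — in particular edges on $P(e_a)$ strictly above $s$ and edges on $P(e_b)\setminus P(e)$. The clean way is to use that the instance is closed under shadows (so the required shadows exist as actual links) and that $s$ lies on both $P(e_a)$ and $P(e_b)$, so $P(e_a)\cup P(e_b)$ is exactly covered by $P(e)\cup P(e_a')$ where $e_a'$ is the shadow of $e_a$ with endpoint pushed down to $s$ — this is a purely path-combinatorial identity using that $T_s$ has only the two leaf-branches to $a$ and $b$. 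Once that identity is in hand, both the contradiction with $|F\cap W|$-maximality (when $e\notin F$) and the derivation $|\delta_F(s)|=1$ (when $e\in F$) follow routinely, and shadow-minimality rules out any extra link crossing $s$.
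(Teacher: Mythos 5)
The paper does not prove Lemma~\ref{l:a}; it is cited from \cite{KN-TAP} with no argument given, so there is no in-paper proof to compare against. Your first part (exactness at leaves via shadow-minimality) is sound. The twin-link part, however, has genuine gaps.

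The central problem is a misreading of $\delta_F(s_e)$. By the paper's definition $\delta_F(s)=\delta_F(\{s\},V\setminus\{s\})$ is the set of links of $F$ having $s$ \emph{as an endpoint}, not the set of links whose covering path passes through $s$ or crosses the cut above $s$. In the case $e_a\neq e_b$ you write ``so both $e_a,e_b$ cross $s$ upward, giving $|\delta_F(s)|\ge 2$.'' This does not follow: $e_a$ and $e_b$ are incident to $a$ and $b$, not to $s$, so neither lies in $\delta_F(s)$ even if both paths go through $s$. Similarly, in the case $e_a=e_b=e$ the parenthetical ``(only $e$ crosses $s$)'' cannot yield $|\delta_F(s)|=1$, since $e=ab$ is not incident to $s$; the one link in $\delta_F(s)$ must be a different link, namely the one covering the edge $(s,\mathrm{parent}(s))$, and one must argue (via optimality and shadow-minimality) that its lower endpoint is forced to be $s$ and that it is unique.

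Two further gaps in the $e_a\neq e_b$ case. First, you assert that the second endpoints $w_a,w_b$ of $e_a,e_b$ lie strictly above $s$. This is not automatic: $w_a$ could equal $s$, or lie on the interior of the path $P(a,s)$ or $P(s,b)$ (the twin path between $a$ and $b$ can have intermediate degree-$2$ nodes, since ``twin-link'' only forces that contracting $ab$ yields a leaf, not that $a,b$ are children of $s$); ruling these configurations out requires a separate argument using optimality (e.g.\ $e$ would dominate $\{e_a,\,g\}$ for a suitable $g$) rather than being immediate. Second, the proposed exchange $\{e_a,e_b\}\to\{e,e'\}$ with $e'$ a shadow of $e_a$ ending at $s$ does not in general preserve coverage: after the swap the edges of $P(s,w_b)$ are covered by neither $e=ab$ nor $e'$; the ``purely path-combinatorial identity'' you invoke, $P(e_a)\cup P(e_b)=P(e)\cup P(e_a')$, holds only when $P(s,w_b)\subseteq P(s,w_a)$, which you have not established. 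So the exchange, as stated, can destroy feasibility, and the contradiction with maximality of $|F\cap W|$ is not obtained. Finally, you conclude that the impossibility of $|\delta_F(s)|\ge 2$ ``and a short check'' give $|\delta_F(s)|=0$; the case $|\delta_F(s)|=1$ (some link $sw\in F$ with $w$ either inside or outside $T_s$) is not addressed and is precisely where the work lies.
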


Let $\Pi_L$ be the polyhedron defined by (\ref{e:3}), (\ref{e:2}), and (\ref{e:5}).
Then $\Pi_L$ is the convex hull of the exact edge-covers of $L$, see \cite[Theorem~34.2]{Sch}; 
thus by Lemma~\ref{l:a}, these constraints are valid. 
The validity of the constraints (\ref{e:4}) follows also from Lemma~\ref{l:a}.
Consequently, the linear program $\tau=\min\{x(E):x \in \Pi\}$ is a rela\-xation for {\sf TAP}.
Combining techniques from \cite{EFKN-TALG,KN-TAP} and using some new methods, we prove the following.

\begin{theorem} \label{t:main}
{\sf TAP} admits a polynomial time algorithm that computes a solution $F$ such that $|F| \leq \frac{7}{4} \tau$.
\end{theorem}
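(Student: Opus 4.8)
The plan is to obtain the solution $F$ in two phases. In the first phase, we compute an initial edge set $F_0$ that is an exact edge-cover of the leaves $L$ satisfying the parity constraints~(\ref{e:3}) and the twin-link constraints~(\ref{e:4}); this amounts to finding a minimum weight (here, minimum size) exact edge-cover of $L$ in the auxiliary graph on the leaves, which reduces to a minimum weight perfect matching / edge-cover computation solvable in polynomial time. Because $\Pi_L$ is exactly the convex hull of exact edge-covers of $L$ (by \cite[Theorem~34.2]{Sch}), the integral optimum of this sub-system is at most the LP value $\tau$, so $|F_0| \le \tau$. The set $F_0$ covers all tree edges incident to leaves, but may leave ``deep'' tree edges (those not covered by any link of $F_0$) uncovered, so $F_0$ is in general not a feasible {\sf TAP} solution.

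The second phase is a local-search / augmentation procedure on $3$-leaf subtrees: we repeatedly look at minimal rooted subtrees containing exactly three leaves and, using the structure of $F_0$ restricted to that subtree, exchange some links of $F_0$ for shadows and add a small number of new links so that all tree edges inside the subtree become covered. The accounting is charge-based: every added link must be paid for by the slack in constraints~(\ref{e:1}) and~(\ref{e:3}) that the subtree contributes, and the key point is that a $3$-leaf subtree forces enough fractional mass (via the $\lceil |A\cap L|/2\rceil = 2$ lower bound when the subtree has $3$ leaves and the $x(\delta(v))=1$ constraints) to amortize the cost at rate $7/4$ rather than $2$. The fractional-value-exchange idea hinted at in the excerpt is used here to move LP mass between links crossing a stem and links crossing the subtree boundary, so that the parity/twin constraints~(\ref{e:2}),~(\ref{e:4}) can be invoked locally; this is where the constraints~(\ref{e:2}) and~(\ref{e:4}) beyond the \cite{MN} system pay off.

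For the analysis I would set up a fractional dual-style potential: assign to each tree edge, or to each set $A$ in $\hat{\cal E} \cup {\cal O}_L$, a nonnegative charge so that the total charge is at most $\tau$ up to the factor $7/4$, and show that the local exchanges never increase the relevant potential while making progress (some uncovered edge becomes covered, or $|F\cap W|$ increases, guaranteeing termination in polynomially many steps). Concretely, one shows that after processing a $3$-leaf subtree the links used inside it number at most $7/4$ times the contribution of that subtree's constraints to any LP solution — using that three leaves give constraint right-hand side $2$ in~(\ref{e:3}), while a naive cover would use up to $\approx 3$ links, and the shadow-minimality from Definition~\ref{d:shadow} together with Lemma~\ref{l:a} bounds how links can interact across subtree boundaries.

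The main obstacle I anticipate is the case analysis in the second phase: classifying the possible configurations of $F_0$ on a $3$-leaf subtree (how many links stay inside, how many cross the boundary upward, whether a twin-link/stem is present, whether some leaf's link is a shadow reaching high up), and verifying in each case that the exchange is both feasible (restores coverage inside the subtree without breaking coverage elsewhere, which requires care because a link we delete may have been covering edges outside the subtree) and cost-bounded at ratio $7/4$. Making the charging globally consistent — so that a tree edge shared as the ``top'' of one subtree and the ``bottom'' of the next is not double-counted, and so that the exchanges compose over the recursion on rooted subtrees — is the delicate bookkeeping that the bulk of the proof will have to handle.
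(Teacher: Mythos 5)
Your two-phase outline matches the paper's high-level plan (an exact edge-cover of the leaves followed by local fixes on small subtrees), but as written it has two genuine gaps, one quantitative and one structural. First, the bound you extract from phase one, $|F_0|\le\tau$, is the wrong lower-bound currency: it discards exactly the information needed to pay for the up-links added in phase two. The paper instead computes a minimum exact cover $F_L$ of $L$ under a carefully tuned weight function ($\rho$ for leaf--leaf non-twin links, $\rho-\tfrac12$ for leaf-to-internal links, $\rho+\tfrac12$ for twin links, with $\rho=7/4$) and proves the inequality $\rho\tau \ge w(F_L)+\tfrac12\sum_{v\in R}x(\delta(v))$ by a token-redistribution argument that uses constraints (\ref{e:2}) and (\ref{e:4}). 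The second summand --- half the fractional degree of every internal non-stem node --- is the credit that later absorbs the cost of covering the deep tree edges; your accounting has no analogue of it, and the ``slack in constraints (\ref{e:1}) and (\ref{e:3})'' you invoke is never turned into a concrete, globally consistent budget. Note also that an integral exact edge-cover automatically lies in $\Pi_L$, so ``satisfying (\ref{e:3})'' is not an extra condition you can leverage, and $F_L$ need not satisfy (\ref{e:4}) at all --- (\ref{e:4}) is used only to justify the $\rho+\tfrac12$ credit on twin links.

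Second, the entire difficulty of the theorem lives in the step you defer as ``the main obstacle I anticipate.'' The paper's phase two is not a local search on arbitrary $3$-leaf subtrees but a contraction process over \emph{semi-closed} subtrees of $T/I$ guided by the matching $M=\delta_{F_L}(L,L)$: one proves (Lemma~\ref{l:char}) that any semi-closed tree whose credit falls short of $|M'|+|U'|+1$ must be a very specific ``dangerous'' $3$-leaf configuration, and the proof of that characterization is where the LP constraints are actually spent --- the inequalities $\Sigma\ge|U'|+1-2|M'|$, $\Sigma<1/2$, and the final case analysis on $\alpha,\gamma,\epsilon,\theta$ forcing the links $ab'$ (or $ab$) to exist. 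One then needs the separate rematching argument (Algorithm~\ref{alg:tF}) for the case where every minimally semi-closed tree is dangerous. Without the credit inequality of the first paragraph and this characterization, the claim that each $3$-leaf subtree ``forces enough fractional mass to amortize at rate $7/4$'' is an assertion, not a proof; in particular nothing in your sketch rules out the dangerous configurations, which are precisely the subtrees where a naive per-subtree charge fails and a global exchange of the matching is required.
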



\section{Proof of Theorem~\ref{t:main}} \label{s:main}


\subsection{Reduction to the minimum weight leaf edge-cover problem} 

Let $S=\{s_e:e \in W\}$ be the set of stems of $T$ and let $R=V\setminus(L \cup S)$. 
Let $\rho \geq 1.5$ be a parameter, set later to $\rho=7/4$. 
Define a weight function $w$ on $E(L,V)$ by:
$$ 
w_e = \left\{ 
\begin{array}{lll} 
\rho             \ \ \ \ \ \ \ \ & \mbox{ if } e \in \delta(L,L) \setminus W  \\
\rho-\frac{1}{2}                 & \mbox{ if } e \in \delta(L,V \setminus L)  \\
\rho+\frac{1}{2}                 & \mbox{ if } e \in W 
\end{array} 
\right . 
$$

\begin{lemma} \label{l:coupons}
Let $F_L$ be a minimum weight exact cover of $L$ and $x \in \Pi$ such that $x(E)=\tau$. Then:
\begin{equation} \label{e:coupons}
\rho \tau \geq w(F_L)+\frac{1}{2}\sum_{v \in R} x(\delta(v)).
\end{equation}
\end{lemma}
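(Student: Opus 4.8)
The plan is to split the claimed inequality into a part handled by polyhedral integrality and a part verified by a link-by-link comparison of coefficients. First I would eliminate $w(F_L)$ from the picture. Since $x\in\Pi$ it satisfies all the constraints (\ref{e:3}), (\ref{e:2}), (\ref{e:5}) defining $\Pi_L$, so $x\in\Pi_L$; as $\Pi_L$ is the integral polytope whose vertices are the exact covers of $L$, the value $\sum_{e\in E(L,V)}w_e x_e$ is at least $\min\{w(F):F\text{ an exact cover of }L\}=w(F_L)$ (here $\rho\ge 3/2$ keeps all weights positive). Hence it suffices to prove the purely fractional statement
$$\rho\,x(E)\ \ge\ \sum_{e\in E(L,V)}w_e x_e\ +\ \tfrac12\sum_{v\in R}x(\delta(v))\ .$$

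For the second step I would write $\tfrac12\sum_{v\in R}x(\delta(v))=\tfrac12\sum_{e\in E}r(e)\,x_e$, where $r(e)\in\{0,1,2\}$ is the number of endpoints of $e$ in $R$, and put $E_0=E(V\setminus L,V\setminus L)$ (links with no leaf endpoint), so that $E$ is the disjoint union of $E(L,V)$ and $E_0$. Collecting the coefficient of each $x_e$, the displayed inequality becomes $\sum_{e\in E(L,V)}(\rho-w_e-\tfrac12 r(e))x_e+\sum_{e\in E_0}(\rho-\tfrac12 r(e))x_e\ge 0$. Going through the cases: for $e\in E_0$ the coefficient is $\rho-\tfrac12 r(e)\ge\rho-1\ge\tfrac12$; for a leaf--leaf non-twin link and for a leaf--$R$ link the coefficient is exactly $0$; for a leaf--$S$ link it is $+\tfrac12$; and only for a twin-link $e\in W$ (where $w_e=\rho+\tfrac12$ and $r(e)=0$) is it negative, equal to $-\tfrac12$. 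So everything reduces to
$$\sum_{e\in E_0}\bigl(\rho-\tfrac12 r(e)\bigr)x_e\ +\ \tfrac12\,x(\delta(L,S))\ \ge\ \tfrac12\sum_{e\in W}x_e\ .$$

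The twin-link equalities (\ref{e:4}) are exactly what is needed to close this: $\sum_{e\in W}x_e=\sum_{e\in W}x(\delta(s_e))$, and every link with an endpoint at a stem is either a leaf--$S$ link or an $E_0$ link, so the right-hand side is redistributed precisely onto the terms on the left. When $e\mapsto s_e$ is a bijection between $W$ and $S$ this gives $\sum_{e\in W}x(\delta(s_e))=x(\delta(L,S))+\sum_{e\in E_0}|e\cap S|\,x_e$, and (since each $e\in E_0$ has both endpoints in $R\cup S$, whence $r(e)+|e\cap S|=2$) the inequality collapses to the termwise-true $\sum_{e\in E_0}(\rho-1)x_e\ge 0$.

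The main obstacle is the hypothesis just used: distinct twin-links can share a stem --- for instance when a node has three leaf-children --- and then the crude redistribution overloads that stem, so that the local surplus $\tfrac12 x(\delta(s))$ no longer pays for the deficit $\tfrac{m(s)}{2}x(\delta(s))$ it generates ($m(s)$ being the number of twin-links with stem $s$). I expect this case to need an extra ingredient, most plausibly the odd-set inequalities (\ref{e:3}) applied to (odd) sets of leaf-children around a shared stem, combined with the equalities (\ref{e:2}), which together pin down $x(\delta(s))$ and the weights of the surrounding links tightly enough to recover the bound. Making that case watertight, rather than the routine coefficient bookkeeping of the first two steps, is where the real work of the lemma lies.
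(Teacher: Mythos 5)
Your argument is, in substance, the paper's own proof: the paper does the same coefficient bookkeeping as a token redistribution (each link $e$ starts with $\rho x_e$ tokens, sends $\frac{1}{2}x_e$ to each non-leaf endpoint, and each twin-link $e\in W$ then reclaims the $\frac{1}{2}x(\delta(s_e))=\frac{1}{2}x_e$ tokens accumulated at its stem via constraint (\ref{e:4})), and your first step --- passing to the restriction of $x$ to $\delta(L,V)$, which lies in $\Pi_L$, and invoking integrality of the exact-cover polytope to get $\sum_{e\in\delta(L,V)}w_ex_e\ge w(F_L)$ --- is exactly the paper's reduction. The only thing left open in your write-up is the ``main obstacle'' of the last paragraph, and it is vacuous: by Definition~\ref{d:twin} a twin-link is a leaf-to-leaf link whose \emph{contraction creates a new leaf}, which forces every descendant of the stem $s_e$ to lie on the path $P(e)$; hence $s_e$ has exactly the two twins as its leaf descendants, distinct twin-links have distinct stems, and $e\mapsto s_e$ is a bijection of $W$ onto $S$. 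In your example of a node with three leaf-children, none of the three leaf-to-leaf links is a twin-link, since contracting any one of them leaves the third leaf hanging below the contracted node. With that observation your case analysis closes termwise, with no appeal to the odd-set constraints (\ref{e:3}) --- which indeed play no role in this lemma beyond certifying membership in $\Pi_L$.
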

\begin{proof}
Let $x'$ be defined by $x'_e=x_e$ if $e \in \delta(L,V)$ and $x'_e=0$ otherwise.
Note that $x' \in \Pi_L$, since $x$ satisfies (\ref{e:3}), (\ref{e:2}), and (\ref{e:5}). 
Since $F_L$ is an optimal (integral) exact cover of $L$ with respect to the 
weights $w_e$ and $x' \in \Pi_L$, we have:
$$x'\cdot w \geq w(F_L) \ .$$ 
Thus to prove the lemma, it is sufficient to prove the following:
$$\rho \tau \geq x' \cdot w + \frac{1}{2}\sum_{v \in R} x(\delta(v)) \ .$$

Assign $\rho x_e$ tokens to every $e \in E$. The total amount of tokens is exactly $\rho x(E)= \rho \tau$.
We will show that these tokens can be moved around such that the following holds:
\begin{itemize}
\item[(i)]
Every $e \in \delta(L,L)$, and thus every $e \in W$, keeps its initial $\rho x_e$ tokens.
\item[(ii)]
Every $e \in \delta(L,V \setminus L)$ keeps $(\rho-\frac{1}{2})x_e$ tokens from its initial $\rho x_e$ tokens.
\item[(iii)]
Every $e \in W$ gets additional $\frac{1}{2}x_e$ tokens, to a total of $(\rho+\frac{1}{2})x_e$ tokens.
\item[(iv)]
Every $v \in R$ gets $\frac{1}{2}x_e$ tokens for each $e \in \delta(v)$.
\end{itemize}
This distribution of tokens is achieved in two steps.
In the first step, for every $e \in E$, move $\frac{1}{2}x_e$ token from the $\rho x_e$ tokens of $e$ 
to each non-leaf endnode of $e$, if any. 
Note that after this step, (i) and (ii) hold,
and every $v \in V \setminus L$ gets $\frac{1}{2}x_e$ token for each $e \in \delta(v)$.
In the second step, every $e \in W$ gets all the tokens moved at the first step to its stem $s_e$
by the links in $\delta(s_e)$.
The amount of such tokens is $\frac{1}{2}x(\delta(s_e))=\frac{1}{2}x_e$, by (\ref{e:4}).
This gives an assignment of tokens as claimed, and thus the proof of the lemma is complete.
\qed
\end{proof}

To prove Theorem~\ref{t:main} we prove the following.

\begin{theorem} \label{t:main+}
For $\rho=7/4$ there exist a polynomial time algorithm  
that given an instance of {\sf TAP} computes a solution $I$ 
of size at most the right-hand size of {\em (\ref{e:coupons})}.
Thus $|I| \leq \frac{7}{4} \tau$.
\end{theorem}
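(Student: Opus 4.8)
The plan is to start from a minimum weight exact cover $F_L$ of $L$ (computable in polynomial time by reduction to minimum weight perfect matching / edge cover in a general graph, via the standard equivalence with the polytope $\Pi_L$ in \cite[Theorem~34.2]{Sch}), and then repair it into a feasible {\sf TAP} solution $I$ by a local‑search / augmentation procedure on small subtrees, charging the cost of the added links against the ``coupons'' $\frac12 x(\delta(v))$ that inequality (\ref{e:coupons}) assigns to the internal nodes $v \in R$. The first step is to understand the structure of $F_L$: being an exact cover of $L$, every leaf has exactly one incident link of $F_L$; a leaf‑to‑leaf link of weight $\rho$ accounts for two leaves, a leaf‑to‑internal link of weight $\rho-\frac12$ for one leaf, and a twin‑link of weight $\rho+\frac12$ also accounts for its two twin leaves but carries the extra $\frac12$ that (iii) of Lemma~\ref{l:coupons} already pays for. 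The key accounting identity to establish is that, writing $c = |\delta_{F_L}(L,V\setminus L)|$ for the number of ``half‑priced'' links, $w(F_L)$ can be compared to $\rho$ times the number of ``link slots'' we are morally allowed, with a surplus of exactly $\frac12$ per link in $\delta_{F_L}(L,V\setminus L)$ that will pay for part of the augmentation.

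Next I would bound the amount of augmentation needed. The set $F_L$ covers all edges that lie on a path between two leaves that are matched together, but it typically fails to cover edges high in the tree; the uncovered edges $\hat{\cal E}' \subseteq \hat{\cal E}$ form a laminar (indeed forest‑like) structure, and I would argue that covering them can be done greedily ``bottom up'' using at most one extra link per uncovered ``component'', and moreover that each such extra link can be routed so that it covers a whole chain of uncovered edges. The crucial quantitative claim is a \emph{credit invariant}: process the tree in a bottom‑up sweep over $3$‑leaf subtrees (subtrees whose root has essentially three leaf‑descendants after contraction), and maintain that at each stage the number of links used so far plus a potential function never exceeds $w(F_L) + \frac12\sum_{v\in R} x(\delta(v))$ evaluated on the processed part. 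For this I need that every $v \in R$ through which an uncovered edge passes has $x(\delta(v))$ bounded below by a constant (at least $1$, by covering‑type constraints on the laminar family, or by $2$ when two sibling branches both need to be ``lifted''), which is where the new constraints (\ref{e:2}) and (\ref{e:4}) and the value $\rho = 7/4$ enter: the ratio is tuned so that the half‑coupon $\frac12$ at an internal node, possibly combined with the $\frac12$ surplus from a nearby half‑priced leaf link, always covers the cost of the single link added to lift that branch.

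The local‑search step on $3$‑leaf subtrees is needed because a raw greedy augmentation of $F_L$ can be forced to spend too much at a ``bad'' configuration (e.g. three leaves hanging off a path, matched pairwise in the wrong way); the exchange argument replaces two links of $F_L$ of total weight $2\rho$ by one twin‑link plus one lifting link (or reconfigures the matching) so that the resulting local cost matches what the coupons on that subtree's internal nodes can pay. Concretely I would (1) fix $\rho = 7/4$; (2) compute $F_L$; (3) shadow‑complete and make $F_L$ shadow‑minimal and exact with $|F_L\cap W|$ maximal, invoking Lemma~\ref{l:a}; (4) run the bottom‑up sweep, at each $3$‑leaf subtree applying the exchange rules and adding lifting links, debiting the internal‑node coupons and the leaf‑link surpluses; (5) prove the credit invariant is preserved by each of the finitely many local configurations, giving $|I| \le w(F_L) + \frac12\sum_{v\in R}x(\delta(v)) \le \rho\tau = \frac74\tau$ by Lemma~\ref{l:coupons}. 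The main obstacle I anticipate is step (5): enumerating the local configurations of a $3$‑leaf subtree together with the ways $F_L$ can meet it, and checking in each case that the $7/4$ bound is exactly met — in particular handling the interaction between a twin‑link's stem (which has its coupon already spent via (iii)) and a neighbouring branch that must be lifted, since there the only available credit is the $\frac12$ surplus of a half‑priced link, and one must show such a link is always present. Making the coupon‑transfer rules global‑consistent across overlapping $3$‑leaf subtrees (so no coupon is spent twice) is the second delicate point, which I expect to handle by orienting all transfers strictly towards the root so that each node's coupon is released exactly once, at the moment its subtree is processed.
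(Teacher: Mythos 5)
Your high-level plan coincides with the paper's: compute a minimum $w$-weight exact cover $F_L$ of $L$, set up the credit scheme of Lemma~\ref{l:coupons} (with the $\tfrac12 x(\delta(v))$ coupons on $v\in R$ and the $\tfrac12$ surplus on leaf-to-internal links), and then build the solution bottom-up by contracting small subtrees, with a local exchange on $3$-leaf configurations to fix the cases where the credit does not suffice. This is exactly the skeleton of Algorithm~\ref{alg:F}. However, what you have written is a plan, not a proof: the entire content of the theorem lives in the step you defer as ``the main obstacle I anticipate,'' namely the case analysis showing that every configuration where the credit falls short is a specific $3$-leaf ``dangerous'' tree (Lemmas~\ref{l:B'} and~\ref{l:char}), and that dangerous trees can always be repaired by swapping the matching edge $bb'$ for $ab'$ and re-running the minimally-semi-closed-tree search (Algorithm~\ref{alg:tF}). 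Without that analysis there is no bound at all.

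Moreover, one specific step you lean on would fail. You want ``every $v\in R$ through which an uncovered edge passes'' to satisfy $x(\delta(v))\geq 1$ ``by covering-type constraints.'' No such per-node constraint exists in $\Pi$: constraint (\ref{e:2}) forces $x(\delta(v))=1$ only for leaves, and constraint (\ref{e:1}) bounds cuts $\delta(A)$ for rooted subtrees $A\in\hat{\cal E}$, not stars at internal nodes; an internal node can easily have $x(\delta(v))=0$. The paper never uses a per-node bound. Instead it works with the aggregate $\Sigma=\sum_{v\in R'}x(\delta(v))$ over a semi-closed subtree $T'$ and proves $\Sigma\geq |U'|+1-2|M'|$ (Lemma~\ref{l:sigma}) by combining three facts: the cut constraint $x(\delta(T'))\geq 1$; the observation that links leaving unmatched leaves of $T'$ cannot cross $\delta(T')$ (semi-closedness) and cannot join two unmatched leaves (no greedy contraction left), so $\delta(U')$ and $\delta(T')$ are disjoint; and the exact degree constraints $x(\delta(b))=1$ at matched leaves, which cap how much of this mass can escape $R'$. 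The final reduction to the dangerous configuration then needs a further fractional argument trading off the values $\alpha,\gamma,\epsilon,\theta$ on the links $ab, ab', \delta(b,T\setminus T'), \delta(b',T\setminus T')$ against the bound $\Sigma<1/2$. You should replace your per-node claim with this aggregated argument, and then actually carry out the finite case analysis rather than asserting that $\rho=7/4$ is ``tuned'' to make it work.
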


\subsection{The algorithm (Proof of Theorem~\ref{t:main+})}

The algorithm as in Theorem~\ref{t:main+} follows the line of the algorithms in \cite{EFKN-TALG,KN-TAP},
with the major difference that the lower bound is compared versus a fractional solution $x$ and 
not versus an integral solution. This makes the analysis much more involved.

Let $F_L$ be a minimum $w$-weight exact cover of $L$ 
and let $M=\delta_{F_L}(L,L)$ be the set of leaf-to-leaf links in $F_L$.
Initially, assign tokens to the links in $M$ and to the nodes as follows:

\medskip
\medskip

\noindent
{\bf Initial token assignment.} 
\begin{itemize}
\item[(i)]
Every link in $M \setminus W$ gets $\rho$ tokens. 
\item[(ii)]
Every link in $M \cap W$ gets $\rho+\frac{1}{2}$ tokens.
\item[(iii)]
Every leaf unmatched by $M$ gets $\rho-\frac{1}{2}$ tokens.
\item[(iv)]
Every $v \in R$ gets $\frac{1}{2}x(\delta(v))$ tokens.  
\item[(v)]
The root $r$ gets $1$ token.
\end{itemize}

\medskip

By Lemma~\ref{l:coupons} we have: 

\begin{corollary}
The total amount of tokens is at most $\rho \tau+1$.
\end{corollary}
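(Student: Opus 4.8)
The plan is to simply add up the tokens distributed by the initial token assignment (i)--(v) and compare the sum to the right-hand side of (\ref{e:coupons}) plus one. First I would split $M$ into $M \setminus W$ and $M \cap W$ and observe that a link in $M \setminus W$ contributes $\rho$ tokens, which is exactly its weight $w_e$; a link in $M \cap W$ contributes $\rho+\frac{1}{2}$ tokens, again exactly $w_e$; and a leaf $v$ left unmatched by $M$ contributes $\rho-\frac{1}{2}$ tokens. The point is that $F_L$ is an \emph{exact} cover of $L$, so every leaf is incident to exactly one link of $F_L$: the leaves matched by $M$ are covered in pairs by leaf-to-leaf links, and the remaining leaves are each covered by a distinct link in $\delta_{F_L}(L,V\setminus L)$, which has weight $\rho - \frac12$. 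Hence the tokens handed out in (i)--(iii) sum to exactly $w(F_L)$, because we can charge $w_e = \rho$ to each link of $M\setminus W$, $w_e = \rho+\frac12$ to each link of $M\cap W$, and $w_e = \rho-\frac12$ to the unique leaf-to-non-leaf link covering each unmatched leaf (each such link covers exactly one leaf, since its only leaf endpoint is that leaf).

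Next I would account for items (iv) and (v): the nodes in $R$ receive a total of $\frac12 \sum_{v \in R} x(\delta(v))$ tokens, and the root receives $1$ token. Adding everything, the total number of tokens is exactly
\[
w(F_L) + \frac12 \sum_{v \in R} x(\delta(v)) + 1 \ .
\]
By Lemma~\ref{l:coupons}, $w(F_L) + \frac12 \sum_{v\in R} x(\delta(v)) \le \rho\tau$, so the total is at most $\rho\tau + 1$, which is the claim.

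There is essentially no obstacle here; the only point that needs a moment's care is the bookkeeping that the tokens in (i)--(iii) reconstruct $w(F_L)$ exactly rather than merely bounding it, and this is where the exact-cover property of $F_L$ (and the fact that a link in $\delta(L,V\setminus L)$ is incident to precisely one leaf) is used. One should also note that the three weight values $\rho$, $\rho-\frac12$, $\rho+\frac12$ in the definition of $w$ were chosen precisely so that the natural token allocation matches $w$ term by term. Everything else is a direct substitution into the inequality of Lemma~\ref{l:coupons}.
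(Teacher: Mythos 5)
Your proof is correct and follows exactly the route the paper intends: the paper simply writes ``By Lemma~\ref{l:coupons} we have:'' and states the corollary, leaving the bookkeeping implicit, while you fill in the short calculation showing that items (i)--(iii) of the initial token assignment sum to exactly $w(F_L)$ (using that $F_L$ is an exact cover and that each link in $\delta(L,V\setminus L)$ has a unique leaf endpoint), and items (iv)--(v) add $\frac{1}{2}\sum_{v\in R}x(\delta(v))+1$.
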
 

At each iteration, the algorithm iteratively contracts a certain subtree $T'$ of $T/I$, which means the following: 
combine all nodes in $T'$ into a single node $v$, delete edges and links with both endpoints in $T'$,
and the edges and links with one endpoint in $T'$ now have $v$ as their new endpoint.
For a set of links $I \subseteq E$, let $T/I$ denote the tree obtained by
contracting every 2-edge-connected component of $T \cup I$ into a
single node.  Since all contractions are induced by subsets of links,
we refer to the contraction of every $2$-edge-connected component of 
$T \cup I$ into a single node simply as the contraction of the links in $I$.

We refer to the nodes created by contractions as {\em compound nodes}.
Each compound node always owns $1$ token.
Non-compound nodes are referred to as {\em original nodes} (of $T$). 
Each time a contraction takes place, the new compound node gets $1$ token,
which together with the links added is paid by the tokens owned by the contracted subtree.
For technical reasons, $r$ is also considered as a compound node.
The non-contracted links in $M$ and original nodes keep their tokens.
This means that the algorithm maintains the following invariant:

\medskip
\medskip

\noindent
{\bf Tokens Invariant.} 
\begin{itemize}
\item[(i)]
Every link in $M \setminus W$ owns $\rho$ tokens. 
\item[(ii)]
Every link in $M \cap W$ owns $\rho+\frac{1}{2}$ tokens.
\item[(iii)]
Every leaf unmatched by $M$ owns $\rho-\frac{1}{2}$ tokens.
\item[(iv)]
Every $v \in R$ owns $\frac{1}{2}x(\delta(v))$ tokens.  
\item[(v)]
Every compound node owns $1$ token.
\end{itemize}

\medskip

The algorithm starts with a partial solution $I=\emptyset$ and with 
$credit(T/I)=credit(T)$ being the right-hand side of (\ref{e:coupons}) plus $1$. 
It iteratively finds a subtree $T'$ of $T/I$ and a cover $I'$ of $T'$,
and {\em contracts $T'$ with $I'$}, which means the following:
add $I'$ to $I$, contract $T'$, and assign $1$ token to the new compound node. 
To use the notation $T/I$ properly, we will assume that $I$ is an exact cover of $T'$,
namely, that the set of edges of $T/I$ that is covered by $I$ equals the set of edges
of $T'$ (this is possible due to shadow completion). 

\begin{definition}
A contraction of $T'$ with $I'$ is {\em legal} if $tokens(T') \geq |I'|+1$. 
\end{definition}

This means that the set $I'$Œ of links added
to $I$ and the $1$ token assigned to the new compound node are paid by the total
amount of tokens in $T'$. The tokens owned by $T'$ are not reused in any other way.
We do only legal contractions, which implies that at any step of the algorithm
$$
|I|+tokens(T/I) \leq credit(T) \ .
$$
Thus at the last iteration, when $T/I$ becomes a single compound node, $|I|$ is at
most the right-hand side of (\ref{e:coupons}).

Another property of our contracted tree $T'$ is given in the following definition.

\begin{definition} [$M$-compatible tree] 
Let $M$ be a matching on the leaves of $T/I$.
A subtree $T'$ of $T/I$ is {\em $M$-compatible} if for any $bb' \in M$ either 
both $b,b'$ belong to $T'$, or none of $b,b'$ belongs to $T'$.
We say that a contraction of $T'$ is $M$-compatible if $T'$ is $M$-compatible.
\end{definition}

A simple example of a legal $M$-compatible contraction is as follows.

\begin{definition}
Let $uv \in E$ such that 
both $u,v$ are unmatched by $M$.
Then adding $uv$ to the partial solution $I$ 
and assigning $1$ token to the obtained compound node is called a {\em greedy contraction}.
\end{definition}

Note that a greedy contraction is always $M$-compatible, and that if $\rho \geq 1.5$ then it is legal.
One of the steps of the algorithm is to apply greedy contractions exhaustively;
clearly, this can be done in polynomial time.

We now describe a more complicated type of legal $M$-compatible contractions. 
The {\em up-link} $up(a)$ of a node $a$ is the link $au$ such that 
$u$ is as close as possible to the root; such $u$ is called the {\em up-node} of $a$. 
Assuming shadow completion, such $u$ is unique and is an ancestor of $a$. 
For a rooted subtree $T'$ of $T/I$ and a node $a \in T'$ we say that
$T'$ is {\em $a$-closed} if the up-node of $a$ belongs to $T'$ 
(namely, if no link incident to $a$ has its other endnode outside $T'$), 
and $T'$ is {\em $a$-open} otherwise.
For a node set $U$ we let $up(U)=\{up(u):u \in U\}$. A rooted subtree $T'$ of $T/I$ is 
{\em $U$-closed} if there is no link in $E$ from $U \cap T'$ to $T \setminus T'$. 
$T'$ is {\em leaf-closed} if it is $L(T)$-closed. A leaf-closed $T'$ is
{\em minimally leaf-closed} if any proper rooted subtree of $T'$ is not leaf-closed.

\begin{definition} [semi-closed tree] 
Let $M$ be a matching on the leaves of $T/I$.
A rooted subtree $T'$ of $T/I$ is {\em semi-closed} 
(w.r.t. $M$) if it is $M$-compatible and closed w.r.t. its unmatched leaves.
$T'$ is {\em minimally semi-closed} if $T'$ is semi-closed but any proper subtree of $T'$ is not semi-closed.
\end{definition}

For a semi-closed tree $T'$Œ let us use the following notation:
\begin{itemize}
\item
$M'=M(T')$ is the set of links in $M$ with both endnodes in $T'$Œ.
\item
$U'=U(T')$ is the set of unmatched leaves of $T'$.
\end{itemize}

\begin{lemma} [\cite{EFKN-TALG,KN-TAP}] \label{l:up}
If $T'$ is minimally semi-closed then $M' \cup up(U')$ is an exact cover of $T'$.
\end{lemma}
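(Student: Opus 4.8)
The plan is to show that $M' \cup up(U')$ covers every edge of $T'$, and then to argue it is \emph{exact}, i.e.\ no proper shadow of it covers $T'$. Since $T'$ is semi-closed it is closed with respect to its unmatched leaves $U'$, so for each $u \in U'$ the up-link $up(u)$ stays inside $T'$; hence $M' \cup up(U')$ is indeed a set of links of $T/I$ with both endnodes in $T'$, and it makes sense to ask whether it covers $T'$. For the covering part I would proceed bottom-up along $T'$: take any edge $(a,\mathrm{parent}(a))$ of $T'$ and look at the subtree $T_a$ hanging below it. Because $T'$ is $M$-compatible, every leaf of $T_a$ is either matched by a link of $M$ (to a leaf that, by minimality of $T'$, must also lie in $T_a$ unless the match crosses the edge — see below) or is an unmatched leaf in $U'$, in which case its up-link crosses out of $T_a$ and covers the edge. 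So the only way the edge $(a,\mathrm{parent}(a))$ could fail to be covered is if every leaf below $a$ is matched within $T_a$ and no link of $up(U')$ reaches above $a$; but then $T_a$ would itself be a semi-closed subtree (it is $M$-compatible and closed w.r.t.\ its unmatched leaves, of which it has none), contradicting the minimality of $T'$. This is the crux of the argument and the step I expect to be the main obstacle — making precise that "no cross edge, all leaves matched internally" forces a smaller semi-closed tree, and handling the boundary case where $a$ is itself a leaf of $T'$.

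Once covering is established, exactness follows by a counting/structure argument. By Lemma~\ref{l:a} and the reduction, we may treat $M'$ as an exact cover of its own matched leaves and $up(U')$ as a collection of links, one per unmatched leaf, each going strictly upward. I would show that replacing any single link of $M' \cup up(U')$ by a proper shadow breaks the cover: a link $bb' \in M'$ is the unique link touching $b$ among $M' \cup up(U')$ (since $b \notin U'$ and $up$-links are only assigned to unmatched leaves), so shrinking it uncovers the bottom edge at $b$; a link $up(u)$ for $u \in U'$ is likewise the only link at $u$, so any proper shadow of it leaves the edge incident to $u$ in $T'$ uncovered. Hence $M' \cup up(U')$ is shadow-minimal, and being a cover in which each leaf of $T'$ is incident to exactly one of its links, it is an exact cover of $T'$.

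I would finally double-check the degenerate situations: when $U' = \emptyset$ the claim reduces to $M'$ being an exact cover of $T'$, which again follows because minimality of $T'$ forbids any proper closed $M$-compatible subtree and $M'$ already saturates every leaf; when $M' = \emptyset$ the claim is that the up-links of all leaves form an exact cover, which is the base case already used in \cite{EFKN-TALG,KN-TAP}. Since the statement is attributed to \cite{EFKN-TALG,KN-TAP}, I would keep the write-up short, citing those papers for the routine parts and only spelling out the minimality argument in the new notation of this paper.
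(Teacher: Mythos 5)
The paper itself does not prove this lemma --- it is imported from \cite{EFKN-TALG,KN-TAP} --- so I am judging your argument on its own. Your covering argument is the intended one and is sound: if the edge $(a,\mathrm{parent}(a))$ of $T'$ were uncovered, then no link of $M$ and no up-link of an unmatched leaf would cross it, so $T_a$ would be $M$-compatible and closed w.r.t.\ its unmatched leaves, i.e.\ a proper semi-closed rooted subtree, contradicting minimality. (State it for a general uncovered edge rather than only for the ``all leaves of $T_a$ matched internally'' case; $T_a$ may well contain unmatched leaves, and then closedness of $T_a$ w.r.t.\ them is exactly what the non-crossing of their up-links gives you.)

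The genuine problem is your reading of ``exact.'' In this paper an \emph{exact cover of $T'$} is defined (in the paragraph introducing contractions) as a link set whose covered tree-edge set equals precisely the edge set of $T'$ --- nothing outside $T'$ is covered. It is not shadow-minimality, and it is not the leaf-degree-one condition (that notion is reserved for ``exact cover of $L$''). So your second paragraph proves the wrong property; moreover its shadow-minimality claim is itself shaky, since a proper shadow of $bb'$ that keeps the endpoint $b$ still covers the bottom edge at $b$, so uniqueness of the link at $b$ does not by itself rule out such a replacement. Fortunately the property actually required is already contained in your first paragraph: every link of $M'$ has both endnodes in $T'$ by $M$-compatibility, and every $up(u)$ for $u\in U'$ has its up-node in $T'$ because $T'$ is closed w.r.t.\ its unmatched leaves; since $T'$ is a rooted subtree of $T/I$, the tree path between any two of its nodes stays inside it, so no edge outside $T'$ is covered. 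Replace the second paragraph with this one-line observation and the proof is complete.
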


Thus a minimally semi-closed tree $T'$ admits a cover of size $|M'|+|U'|$. 
This motivates the following definition, that concerns {\em arbitrary} semi-closed trees, 
that may not be minimal.

\begin{definition} [deficient tree]
A semi-closed tree $T'$ is {\em deficient} if \\
$credit(T')< |M'|+|U'|+1$.
\end{definition}

Summarizing, our algorithm maintains the following additional invariant:

\medskip
\medskip

\noindent
{\bf Partial Solution Invariant.} \\
The partial solution $I$ is obtained by sequentially applying a greedy contraction 
or a legal semi-closed tree contraction with an exact cover.

\medskip

In the next section we will prove the following key statement.

\begin{lemma} \label{l:B'}
Suppose that $\rho=7/4$ and that the Tokens Invariant and the Partial Solution Invariant hold for $T$, $M$, and $I$, 
and that $T/I$ has no greedy contraction. 
Then there exists a polynomial time algorithm that finds a non-deficient semi-closed tree $T'$ of $T/I$ 
and an exact cover $I \subseteq E$ of $T'$ of size $|I'|=|M'|+|U'|$. 
Furthermore, such $T'$ can be found even without knowing the LP solution $x$.
\end{lemma}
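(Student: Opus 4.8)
The plan is to find a semi-closed tree $T'$ of $T/I$ that is \emph{not} deficient, i.e. for which $credit(T') \ge |M'|+|U'|+1$, together with the exact cover $M' \cup up(U')$ guaranteed by Lemma~\ref{l:up} when $T'$ is chosen minimally semi-closed. First I would take $T'$ to be a \emph{minimally semi-closed} subtree of $T/I$; such a tree exists and can be found in polynomial time by descending from the root and stopping at an inclusion-minimal rooted subtree that is $M$-compatible and closed with respect to its unmatched leaves. By Lemma~\ref{l:up}, $I' = M' \cup up(U')$ is then an exact cover of $T'$ with $|I'| = |M'|+|U'|$, so the only thing left to establish is that some semi-closed $T'$ can be chosen to be non-deficient. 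Since the cover has exactly the required size, a legal contraction of $T'$ with $I'$ is possible precisely when $tokens(T') \ge |I'|+1 = |M'|+|U'|+1$, which is the non-deficiency condition. Crucially, the tokens owned by the nodes and links of $T'$ are, by the Tokens Invariant, $\rho$ per link of $M'\setminus W$, $\rho+\tfrac12$ per link of $M'\cap W$, $\rho-\tfrac12$ per unmatched leaf, $\tfrac12 x(\delta(v))$ per $v\in R$ inside $T'$, and $1$ per compound node inside $T'$ — and all of these except the $R$-tokens are determined without reference to $x$, which is why $T'$ can be found without knowing $x$.

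The heart of the argument is a \textbf{case analysis on the structure of a minimally semi-closed tree $T'$}, bounding $tokens(T')$ from below against $|M'|+|U'|+1$. With $\rho = 7/4$ the per-object budgets are $7/4$, $9/4$, $5/4$, and $1$. Summing naively over $M'$ and $U'$ gives at least $\tfrac74|M'\setminus W| + \tfrac94|M'\cap W| + \tfrac54|U'|$, which already beats $|M'|+|U'|+1$ whenever $T'$ is ``large enough'' — concretely, $\tfrac34(|M'|+|U'|)$ (or more, using the extra $\tfrac12$ for twin-links and stems) must cover the additive $+1$ plus the $1$ token for the new compound node that must be paid for out of $T'$ if $T'$ itself contains a compound node. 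So the only problematic minimally semi-closed trees are the \emph{small} ones: those with $|M'|+|U'|$ equal to $1$ or $2$, and with no internal compound node (equivalently, their only compound node, if any, is an original leaf that got contracted — handled via the root exception). I would enumerate these: a single unmatched leaf whose up-link closes a rooted subtree; a single $M$-edge forming a closed subtree (split into the twin-link / stem case, where $R$-tokens at the stem $s_e$ contribute $\tfrac12 x(\delta(s_e)) = \tfrac12 x_e$, and the non-twin case); and the two-object combinations. For each such configuration, if $T'$ as found is deficient, I would argue we can instead contract a \emph{different}, possibly non-minimal semi-closed tree — either by a greedy-style move (excluded by hypothesis, so this case cannot arise), or by enlarging $T'$ to absorb neighboring structure so that the amortized $\tfrac34$-surplus per object pays the constant, or by exploiting the $\tfrac12 x(\delta(v))$ tokens at nodes of $R$ on the up-links $up(U')$.

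The \textbf{main obstacle} I anticipate is exactly the fractional $R$-tokens: unlike the analyses of \cite{EFKN-TALG,KN-TAP}, where the lower bound was an integral solution and every internal node carried a clean $1$ or $\tfrac12$ token, here a node $v\in R$ contributes only $\tfrac12 x(\delta(v))$, which can be arbitrarily small. So in the tight small cases I cannot assume an $R$-node on an up-link pays anything. The resolution will have to come from the new LP constraints — the leaf-degree equalities (\ref{e:2}), the odd-set inequalities (\ref{e:3}), and the twin-link/stem equalities (\ref{e:4}) — which force $x(\delta(\cdot))$ to be substantial along certain paths, together with the promised \emph{token-exchange / fractional-value-exchange} technique hinted at in the introduction: redistributing fractional weight among links incident to a tight subtree so that the $\tfrac12 x(\delta(v))$ reservoirs at the relevant $R$-nodes are provably large. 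Getting this exchange to be simultaneously valid (respecting all of (\ref{e:1})–(\ref{e:5})) and to deliver enough tokens in every residual small case is where the real work lies; the rest is bookkeeping over a short finite list of tree shapes. Because the enlargement arguments and the exchange only reference the combinatorial structure of $T/I$, $M$, and the up-links — never the numeric value of $x$ — the selected $T'$ and its cover $I'$ are computable in polynomial time without solving the LP, which gives the ``furthermore'' clause.
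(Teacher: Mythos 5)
Your high-level plan is right: take a minimally semi-closed tree, use Lemma~\ref{l:up} for the cover $M' \cup up(U')$, and handle the case where that tree is deficient by picking a different (larger) semi-closed tree. You have also correctly identified the central obstacle, namely that the $R$-tokens $\tfrac12 x(\delta(v))$ can be arbitrarily small, so the ``every internal node carries $\geq\tfrac12$ token'' shortcut from the integral analyses of \cite{EFKN-TALG,KN-TAP} is unavailable. However, the proposal stops at describing the shape of the argument and explicitly defers ``the real work.'' Two essential ingredients are missing. First, you never actually characterize which minimally semi-closed trees can be deficient. The paper proves (Lemma~\ref{l:char}) a sharp structural characterization: a deficient tree must be \emph{dangerous} in the sense of Definition~\ref{d:dangerous} — exactly three leaves $a,b,b'$ with $|M'|=1$, $M'=\{bb'\}$, $|C'|=|S'|=|U'_0|=0$, the link $ab'$ present, $T'$ being $b$-open, and the contraction of $ab'$ not creating a leaf. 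Arriving at this requires, beyond token bookkeeping, the LP-driven inequality $\Sigma \geq |U'|+1-2|M'|$ (using the cut constraint (\ref{e:1}) on $\delta(T')$ and the leaf-degree equalities (\ref{e:2})), plus a further case analysis with the quantities $\alpha,\gamma,\epsilon,\theta$ to pin down which links must be present and which cuts must be open. Your proposal gestures at using ``the new LP constraints'' but does not derive this $\Sigma$ lower bound or the $|L'|=3$ / $|M'|=1$ structure from it, and without it the ``short finite list of tree shapes'' is not justified.

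Second, and more importantly, the mechanism for escaping a deficient tree in the paper is not a fractional-value exchange but a \emph{combinatorial matching swap}: in every dangerous tree replace the matched pair $bb'$ by $ab'$ to get a modified matching $\tilde M$, then take a minimally semi-closed tree $T'$ with respect to $\tilde M$ (Algorithm~\ref{alg:tF}). The point is that any $\tilde M$-minimally-semi-closed tree properly contains a dangerous tree, hence is strictly larger than any $M$-minimally-semi-closed tree, and can be shown to be non-dangerous (and the bijection $M\leftrightarrow\tilde M$ preserves $|M(T')|$, so the cover still has size $|M'|+|U'|$). Your suggestion to ``enlarge $T'$ to absorb neighboring structure'' or to ``redistribute fractional weight'' does not specify how to pick the larger tree, why it would be semi-closed w.r.t.\ $M$, or why the surplus would suffice; and the fractional-exchange idea you invoke would also make the choice depend on $x$, undermining the ``without knowing $x$'' claim, whereas the paper's dangerous-tree test and matching swap are purely combinatorial. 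So the gap is genuine: both the structural characterization of deficient trees and the non-fractional escape mechanism need to be supplied.
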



\begin{algorithm}[H]
\caption{{\sc LP-Tree-Cover}$(T=(V,{\cal E}),E)$  (A $1.75$-approximation algorithm)} \label{alg:F}
{\bf initialize:}  $I \gets \emptyset$  \\
$F_L \gets$ minimum $w$-weight exact edge cover of $L$, $M \gets \delta_{F_L}(L,L)$.  \\
\While{\em $T/I$ has more than one node}
{
Exhaust greedy contractions and update $I$ and $M$ accordingly. \\
Find a subtree $T'$ of $T/I$ and an exact cover $I'$ of $T'$ as in Lemma~\ref{l:B'}. \\
Contract $T'$ with $I'$.  
}
\Return{$I$}
\end{algorithm}

\vspace{0.2cm}

Algorithm {\sc LP-Tree-Cover} (Algorithm~\ref{alg:F}) initiates $I \gets \emptyset$ as a partial cover.
It computes a minimum $w$-weight exact edge cover $F_L$ of $L$, sets $M=\delta(F_L,F_L)$ 
and initiates the described credit scheme.
In the main loop, the algorithm iteratively exhausts greedy contractions,
then computes $T',I'$ as in Lemma~\ref{l:B'}, and contracts $T'$ with $I'$.
The stopping condition is when $I$ covers $T$, namely, when $T/I$ is a single node.

It is easy to see that all the steps in the algorithm can be implemented in polynomial time.
The credit scheme used implies that the algorithm computes a 
solution $I$ of size at most $\rho$ times the right-hand size of~(\ref{e:coupons}).
Hence it only remains to prove Lemma~\ref{l:B'}.

\section{Proof of Lemma~\ref{l:B'}} \label{s:B'}

In what follows, for a subtree $T'$ of $T/I$ let us use the following notation:
\begin{itemize}
\item 
$M'=M(T')$ is the set of links in $M$ with both endnodes in $T'$.
\item
$U'$ is the set of leaves of $T'$ unmatched by $M$ 
\item
$U'_0$ is the set of the original leaves in $U'$. 
\item
$L'=L(T')$ is the set of leaves of $T'$ and
$S'=S(T')$ is the set of stems of $T'$.
\item
$R'=V(T') \setminus (L' \cup S')$ and $\Sigma=\sum_{v \in R'}x(\delta(v))$.
\item
$C'$ is the set of non-leaf compound nodes of $T'$ (this includes $r$, if $r \in T'$).
\end{itemize}

\begin{definition} [dangerous tree] \label{d:dangerous}
A semi-closed tree $T'$ is called {\em dangerous} (see Fig.~\ref{f:dang}) if 
$|C'|=|S'|=|U'_0|=0$, $|M'|=1$, $|L'|=3$, and if $a$ is the unmatched leaf of $T$ then 
there exists an ordering $b,b'$ of the matched leaves of $T'$ such that $ab' \in E$, 
the contraction of $ab'$ does not create a new leaf, and $T'$ is $b$-open.
If such an ordering $b,b'$ is not unique 
(namely, if also $ab \in E$, the contraction of $ab$ does not create a new leaf, and $T'$ is $b'$-open,
see Fig.~\ref{f:dang}(b)), then we will assume that the up-node of $b$ is an ancestor of
the up-node of $b'$.
\end{definition}

\begin{figure}
\centering 
\epsfbox{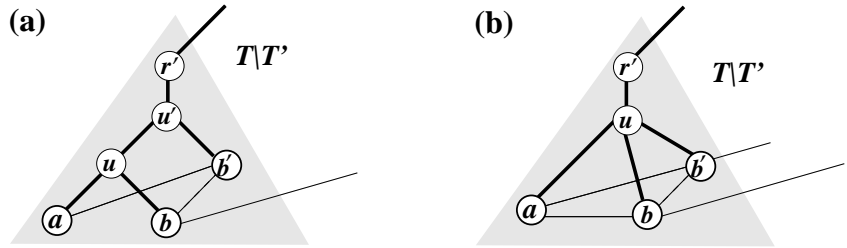}
\caption{Dangerous trees. The dashed arc shows the matched pair $bb'$. 
Solid thin lines show links that must exist in $E$. Here $u$ is not a stem and $a$ is a compound node. 
Some of the edges of $T$ can be paths and $u=r'$ may hold.} 
\label{f:dang}
\end{figure}

In the next section we will prove the following key statement.

\begin{lemma} \label{l:char}
Under the assumptions of Lemma~\ref{l:B'}, any deficient tree is dangerous.
\end{lemma}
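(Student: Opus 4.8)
The plan is to establish the contrapositive: if a semi-closed tree $T'$ is \emph{not} dangerous, then it is not deficient, i.e. $credit(T') \geq |M'|+|U'|+1$. I will work with the token-based lower bound on $credit(T')$ that is forced by the Tokens Invariant, namely
\[
credit(T') \ = \ tokens(T') \ \geq \ \rho|M'\setminus W| + \Bigl(\rho+\tfrac12\Bigr)|M'\cap W| + \Bigl(\rho-\tfrac12\Bigr)|U'| + \tfrac12\Sigma + |C'| \ ,
\]
where $\Sigma=\sum_{v\in R'}x(\delta(v))$ and I also account for the $1$ token of the root if $r\in T'$ (which is inside $|C'|$). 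With $\rho=7/4$ this gives $credit(T') \geq \tfrac74|M'| + \tfrac12|M'\cap W| + \tfrac54|U'| + \tfrac12\Sigma + |C'|$. So the inequality $credit(T') \geq |M'|+|U'|+1$ follows as soon as we can produce, from the structural features of $T'$, a surplus of at least $1$ beyond the ``base cost'' $|M'|+|U'|$. Each matched pair already contributes a surplus of $\tfrac34$ (more, $\tfrac54$, if it is a twin-link), each unmatched leaf a surplus of $\tfrac14$, each node of $C'$ a full unit, and each node of $R'$ a surplus of $\tfrac12 x(\delta(v))$, which is positive because such $v$ cannot have $x(\delta(v))=0$ (it lies on some tree edge that must be covered).

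The argument then splits into cases by how small $T'$ can be, since for ``large'' $T'$ the surplus trivially exceeds $1$. If $|C'|\geq 1$ we are done immediately. If $|M'\cap W|\geq 1$ then that pair alone gives surplus $\tfrac54 > 1$. If $|U'|\geq 2$ together with $|M'|\geq 1$, or $|M'|\geq 2$, or $|M'|\geq 1$ together with $|U'_0|\geq 1$ (an original unmatched leaf, whose neighbourhood structure can be exploited as in \cite{EFKN-TALG,KN-TAP}), the surplus again reaches $1$ — here I would need the bookkeeping from the earlier papers that an original unmatched leaf or a stem in $T'$ forces extra $R'$-tokens or an extra leaf, contributing the missing surplus. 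Also, since $T/I$ has no greedy contraction, $T'$ cannot consist of two unmatched leaves joined by a link; more generally the no-greedy-contraction hypothesis constrains the $U'$-only configurations. This whittles the potentially-deficient trees down to exactly those with $|C'|=|S'|=|U'_0|=0$, $|M'|=1$, and $|L'|=3$ (one matched pair $b,b'$ plus one unmatched, necessarily compound, leaf $a$) — the shape in Definition~\ref{d:dangerous}.

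For this last shape, the token count is exactly $credit(T') \geq \tfrac74 + \tfrac54 + \tfrac12\Sigma = 3 + \tfrac12\Sigma$, while $|M'|+|U'|+1 = 3$, so deficiency can only occur if $\Sigma$ contributes nothing and, crucially, if $T'$ really does satisfy \emph{all} the combinatorial requirements in the dangerous definition — in particular that there is an ordering of $\{b,b'\}$ with the link $ab'\in E$ whose contraction creates no new leaf and with $T'$ being $b$-open. If $T'$ is semi-closed of this shape but \emph{fails} one of these extra conditions (e.g. it is $b$-closed for every choice of matched leaf, or every candidate link $ab'$ would create a new leaf, or $a$ is an original leaf), I must show $credit(T')$ picks up the missing surplus; typically $b$-closedness forces the up-link of $b$ to stay inside $T'$ which, by minimality/shadow arguments, forces an extra internal $R'$-node carrying positive $x$-weight, and an original unmatched leaf is excluded since $|U'_0|=0$.

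The main obstacle I expect is precisely this finest level of the case analysis: arguing that a non-dangerous tree of the critical shape $|M'|=1$, $|L'|=3$ always carries that extra bit of credit — $\Sigma>0$ or an extra compound/stem node — because it requires translating ``$T'$ is $b$-closed'' or ``contracting $ab'$ creates a new leaf'' into a concrete additional token, and doing so uniformly over all embeddings of such a $T'$ inside $T/I$, using shadow completion and the definition of semi-closedness. The rest — the large-tree cases and the twin-link case — is routine arithmetic on the token inequality above. \qed
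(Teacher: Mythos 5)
Your overall skeleton (token counting, then a case analysis that whittles the possibly-deficient trees down to the shape $|C'|=|S'|=|U'_0|=0$, $|M'|=1$, $|L'|=3$) matches the paper's, but there are concrete errors and a missing core argument. First, your token lower bound is wrong for unmatched \emph{compound} leaves: by the Tokens Invariant a compound node owns only $1$ token, not $\rho-\tfrac12=\tfrac54$, and in the critical configuration the unmatched leaf $a$ is necessarily compound (since $|U'_0|=0$). With the correct count the surplus in that configuration is $\tfrac34+\tfrac12\Sigma$, so the tree is deficient precisely when $\Sigma<\tfrac12$ --- not ``only if $\Sigma$ contributes nothing,'' as your arithmetic ($3+\tfrac12\Sigma\geq 3$) suggests. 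Your bound would make every tree of the critical shape non-deficient, i.e.\ it would ``prove'' that no deficient trees exist, which is false. Second, you never establish the quantitative bound $\Sigma\geq |U'|+1-2|M'|$ (the paper's Lemma~\ref{l:sigma}, proved by summing the constraints $x(\delta(u))\geq1$ over $u\in U'$ together with $x(\delta(T'))\geq1$, using that $\delta(U')$ and $\delta(T')$ are disjoint and that no link joins two nodes of $U'$). Mere positivity of $x(\delta(v))$ for $v\in R'$ gives no fixed surplus and cannot rule out $|M'|=0$ or $|L'|\geq4$. You also do not treat $|L'|=2$, which the paper excludes by exhibiting a compound node on the $b$--$b'$ path (using $bb'\notin W$).

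The deepest gap is the step you yourself flag as the ``main obstacle'': showing that a deficient tree of the critical shape actually has the dangerous combinatorial structure ($ab'\in E$ and $T'$ $b$-open, or the symmetric alternative). The paper does this by a purely LP-based computation: setting $\alpha=x_{ab}$, $\gamma=x_{ab'}$, $\epsilon=x(\delta(b,T\setminus T'))$, $\theta=x(\delta(b',T\setminus T'))$, it derives from $x(\delta(b))=x(\delta(b'))=1$, $x(\delta(a))\geq1$, $x(\delta(T_u))\geq1$, and $x(\delta(T'))\geq1$ the inequalities $\Sigma\geq 1-\gamma$ (or $\geq 1-\alpha-\gamma$), $\Sigma\geq 1-\theta-\epsilon$, and $\alpha+\epsilon\leq1$, $\gamma+\theta\leq1$; combined with $\Sigma<\tfrac12$ these force $\gamma,\epsilon>0$ (or $\alpha,\theta>0$), i.e.\ the link $ab'$ exists and $T'$ is $b$-open. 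Your proposed substitute --- that $b$-closedness forces an extra $R'$-node ``carrying positive $x$-weight'' --- is not a proof: positive weight does not yield the required $\tfrac12$ of surplus, and the conclusion needed is not just extra credit but the existence of specific links in $E$. Without this fractional-charging argument the lemma is not established.
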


Note that the property of a tree being dangerous depends only on the structure
of the tree and existence/absence of certain links in $E$ and $M$, and thus can be
tested in polynomial time.
If $T/I$ has a minimally semi-closed subtree $T'$ that is not dangerous, then $T'$ is not deficient,
so $T'$ and $I'=M' \cup up(U')$ satisfy the requirement of Lemma~\ref{l:B'}.
Otherwise, if all minimally semi-closed subtrees of $T/I$ are dangerous, then
we execute Algorithm~\ref{alg:tF} below.
In the algorithm we define a new matching $\tilde{M}$ obtained
from $M$ by replacing the link $bb'$ by the link $ab'$ in each dangerous tree.
Note that the property of a tree being semi-closed or dangerous depends on the
matching. In what follows, ``dangerous'' always means w.r.t. the matching $M$; 
for ``semi-closed'' the default matching is $M$, and we will specify each time when a tree
is semi-closed w.r.t. the new matching $\tilde{M}$.

\vspace{0.2cm}

\begin{algorithm}[H]
\caption{{\sc Find-Tree}$(T=(V,{\cal E}),E,M)$ (Finds a non-dangerous semi-closed tree $T'$ and 
its exact cover $I'$ of size $|I'|=|M'|+|U'|$, when all minimally semi-closed trees are dangerous.)}
\label{alg:tF}
Let $\tilde{M}$ be a matching on the leaves of $T/I$ obtained from $M$ by replacing 
the link $e=bb'$ by the link $\tilde{e}=ab'$ in every dangerous tree $T_0$ of $T$. \\
Let $T'$ be a minimally semi-closed tree w.r.t. the matching $\tilde{M}$. \\
\Return{$T'$ and $I'=\tilde{M}(T') \cup up(\tilde{U}')$}, 
where $\tilde{U}'$ is the set of unmatched leaves of $T'$ w.r.t. $\tilde{M}$.
\end{algorithm}

\vspace{0.2cm}

\begin{lemma}
Algorithm~\ref{alg:tF} finds a non-dangerous (non-minimal) 
semi-closed tree $T'$ and its cover $I'$ of size $|I'|=|M'|+|U'|$.
\end{lemma}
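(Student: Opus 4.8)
The plan is to show that Algorithm~\ref{alg:tF} is correct by tracking how the matching modification $M \to \tilde M$ interacts with the notions of ``semi-closed'' and ``dangerous'', and by verifying that the token budget survives the switch. First I would check that $\tilde M$ is well-defined: each dangerous tree $T_0$ is a $3$-leaf tree with a unique matched pair $bb'$ and a designated unmatched leaf $a$, and the tie-breaking rule in Definition~\ref{d:dangerous} (preferring the ordering in which $\mathrm{up}(b)$ is an ancestor of $\mathrm{up}(b')$) makes the replacement link $\tilde e = ab'$ canonical; moreover distinct minimally semi-closed dangerous trees are vertex-disjoint (they are minimal semi-closed subtrees), so the replacements do not conflict and $\tilde M$ is again a matching on the leaves of $T/I$. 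Then I would invoke Lemma~\ref{l:up} with the matching $\tilde M$ in place of $M$: the tree $T'$ returned is minimally semi-closed \emph{with respect to} $\tilde M$, so $\tilde M(T') \cup \mathrm{up}(\tilde U')$ is an exact cover of $T'$ of size $|\tilde M(T')| + |\tilde U'|$, which is exactly the claimed $|M'|+|U'|$ once we argue these counts agree.

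The core of the argument is that $T'$ is \emph{not dangerous} (with respect to $M$, which is the only sense of ``dangerous'' in play). The idea is that the only way a minimally semi-closed tree w.r.t.\ $\tilde M$ could be dangerous w.r.t.\ $M$ is if it essentially coincided with one of the original dangerous trees $T_0$; but inside $T_0$ the matched pair under $\tilde M$ is $ab'$, and by the defining property of a dangerous tree the contraction of $ab'$ does not create a new leaf and $T_0$ is $b$-open, so $T_0$ is \emph{not} closed w.r.t.\ its $\tilde M$-unmatched leaf $b$ — hence $T_0$ is not even semi-closed w.r.t.\ $\tilde M$, so $T'$ strictly contains (or is otherwise structurally different from) such a $T_0$. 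I would spell out that a minimally semi-closed tree w.r.t.\ $\tilde M$ that contains a modified dangerous $T_0$ must be strictly larger than $T_0$, and being strictly larger it fails at least one of the rigid size conditions $|C'|=|S'|=|U'_0|=0$, $|M'|=1$, $|L'|=3$ that dangerousness demands. A case check on how the extra structure (an extra leaf, an extra matched pair, a compound node, a stem, or an original unmatched leaf) enters $T'$ shows none of the configurations that Definition~\ref{d:dangerous} allows can reappear, so $T'$ is non-dangerous. By Lemma~\ref{l:char}, a non-dangerous semi-closed tree is non-deficient, so $credit(T') \ge |M'| + |U'| + 1$, and the contraction of $T'$ with $I' = \tilde M(T') \cup \mathrm{up}(\tilde U')$ is legal.

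The remaining bookkeeping is to reconcile the tokens. Replacing $bb'$ by $ab'$ inside a dangerous tree converts a matched leaf ($a$ previously unmatched, owning $\rho - \tfrac12$ tokens) into a matched one and frees $b$ to become unmatched (now owning, conceptually, $\rho - \tfrac12$); since $ab'$ is a leaf-to-leaf link not in $W$ (its contraction creates no new leaf), the pair $\{a,b'\}$ under $\tilde M$ should be credited $\rho$ tokens, and the net effect on the total token count within $T_0$ is zero — the $\rho$ tokens formerly carried by $bb'$ plus the $\rho-\tfrac12$ of the unmatched leaf redistribute to $\rho$ for the new matched pair plus $\rho-\tfrac12$ for the newly unmatched $b$. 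I would state this as: the Tokens Invariant continues to hold with $\tilde M$ substituted for $M$, with the same per-object charges, so Lemma~\ref{l:B'}'s hypotheses transfer and $tokens(T') \ge |M'|+|U'|+1$ gives a legal contraction. The main obstacle I expect is the case analysis showing $T'$ cannot be dangerous: one has to rule out the possibility that switching the matching inside one dangerous tree and then re-minimizing lands exactly on another rigid $3$-leaf dangerous configuration, and this requires carefully using the $b$-openness clause to show the old dangerous trees literally stop being semi-closed under $\tilde M$, forcing genuine growth of $T'$.
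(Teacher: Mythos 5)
Your high-level plan matches the paper: switch to $\tilde M$, invoke Lemma~\ref{l:up} for the cover size, observe that each dangerous $T_0$ fails to be semi-closed w.r.t.\ $\tilde M$ (because it is $b$-open and $b$ is now $\tilde M$-unmatched) so $T'$ properly contains some $T_0$, and finally argue that $T'$ is not dangerous. But the core step --- \emph{why} the larger $T'$ cannot be dangerous --- is argued incorrectly. You claim that ``being strictly larger it fails at least one of the rigid size conditions $|C'|=|S'|=|U'_0|=0$, $|M'|=1$, $|L'|=3$.'' This is false in general: if the extra structure between the root of $T_0$ and the root of $T'$ is just a path (no side subtrees, no compound nodes, no stems), then $T'$ still has exactly the three leaves $a,b,b'$, still has $|M'|=1$, $|U'_0|=0$, $|C'|=|S'|=0$. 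A strictly larger tree need not violate any of those cardinality conditions. The actual reason $T'$ is non-dangerous in this critical case is that $T'$ is $b$-\emph{closed} by construction (it is semi-closed w.r.t.\ $\tilde M$ and $b$ is $\tilde M$-unmatched), so $T'$ cannot satisfy the $b$-openness clause with the ordering $(b,b')$; and the alternative ordering $(b',b)$ is ruled out either because it already failed in $T_0$ (and that failure persists), or --- when both orderings are available --- by the tie-break in Definition~\ref{d:dangerous}, which guarantees $\mathrm{up}(b)$ is an ancestor of $\mathrm{up}(b')$, so $\mathrm{up}(b)\in T'$ forces $\mathrm{up}(b')\in T'$ and hence $T'$ is $b'$-closed as well. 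You gesture at ``carefully using the $b$-openness clause'' in your last sentence, but you use it only to argue $T'$ grows, not to rule out $T'$ being dangerous; as written the argument has a genuine gap.

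Two smaller issues. First, you should explicitly record that $T'$ is semi-closed w.r.t.\ the \emph{original} $M$: this follows because $T'$ is $\tilde M$-compatible and also $M$-compatible (each modified dangerous tree is either entirely inside or entirely outside $T'$), and because any $M$-unmatched leaf of $T'$ is either $\tilde M$-unmatched (so $T'$ is closed for it by construction) or is some $a$ from a dangerous $T_0\subseteq T'$, and $T_0$ was already $a$-closed. This is needed before Lemma~\ref{l:char} can be applied. Second, the token paragraph is a detour: the lemma's claim $|I'|=|M'|+|U'|$ follows simply from the bijection between $M$ and $\tilde M$, which gives $|M(T')|=|\tilde M(T')|$ and hence $|\tilde U'|=|U'|$; the Tokens Invariant is maintained throughout w.r.t.\ $M$, not $\tilde M$, and re-deriving a $\tilde M$-version of it is neither needed nor what the paper does.
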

\begin{proof}
The statement was essentially proved in \cite{EFKN-TALG,KN-TAP}, as 
our dangerous trees coincide with a similar definition in \cite{EFKN-TALG,KN-TAP}.
The key point is that for any dangerous tree $T_0$, 
either all of $a,b,b'$ belong to $T'$, or none of them belongs to $T'$.
This implies that $T'$ is semi-closed and properly contains some dangerous tree. 
Consequently, it can be shown that $T'$ is not dangerous.
Furthermore, the bijective correspondence between links in $M$ and $\tilde{M}$
implies $|M(T')|=|\tilde{M}(T')|$, and the statement follows.
\qed
\end{proof}

The proof of Lemma~\ref{l:B'} is now complete. It remains only to prove Lemma~\ref{l:char}.

\section{Proof of Lemma~\ref{l:char}}

Let $T'$ be a deficient tree with root $r'$, so $tokens(T')-(|M'|+|U'|)<1$.
We will show that $T'$ is dangerous (under the assumptions of Lemma~\ref{l:B'}).

In what follows, note that $M$ is a matching on the leaves of $T$, 
and by the Partial Solution Invariant, 
$M$ remains a matching on the leaves of $T/I$ and every leaf of $T/I$ matched
by $M$ is an original leaf of $T$; this is so since we contract only $M$-compatible trees. 
In particular, $x(\delta(b))=1$ for every $b \in L \setminus U$, by (\ref{e:2})
(note that $L \setminus U$ is the set of leaves of $T/I$ matched by $M$). 
Furthermore, the Partial Solution Invariant implies that every stem $s$ in $T/I$ 
has exactly two leaf descendant, and they are both original leaves.


The amount of tokens owned by $T'$ is: 
$$
tokens(T')=\rho|M'|+\frac{1}{2}|M' \cap W|+|U'|+(\rho-\frac{1}{2})|U'_0|+|C'|+\Sigma
$$ 
Thus
$$
tokens(T')-(|M'|+|U'|)=(\rho-1)|M'|+\frac{1}{2}|M' \cap W|+(\rho-\frac{3}{2})|U'_0|+|C'|+\Sigma \ .
$$ 
Since $T'$ is deficient, $tokens(T')-(|M'|+|U'|)<1$. Thus for $\rho=7/4$ we have:
\begin{equation} \label{e:u0}
tokens(T')-(|M'|+|U'|)=\frac{7}{4}|M'|+\frac{1}{2}|M' \cap W|+\frac{1}{4}|U'_0|+|C'|+\Sigma < 1 \ .
\end{equation} 

\begin{lemma} \label{l:CMS}
$|C'|=0$, $|M' \cap W|=0$, $|M'| \leq 1$, and $|S'|=0$.
\end{lemma}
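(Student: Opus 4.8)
The plan is to read off the first three claims directly from inequality~(\ref{e:u0}), and then to rule out stems by a short case analysis. First I would observe that every summand on the left-hand side of~(\ref{e:u0}) is nonnegative: $|M'|,|M'\cap W|,|U'_0|,|C'|\ge 0$ trivially, and $\Sigma=\sum_{v\in R'}x(\delta(v))\ge 0$ since $x\ge 0$ by~(\ref{e:5}). As the whole sum is $<1$ and $|C'|$ has coefficient $1$, we get $|C'|=0$ immediately (in particular $r\notin T'$). For $|M'\cap W|$: if $M'$ contained a twin-link then both $|M'|\ge 1$ and $|M'\cap W|\ge 1$, so the first two terms alone would contribute at least $\tfrac34+\tfrac12=\tfrac54>1$, contradicting~(\ref{e:u0}); hence $|M'\cap W|=0$. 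Finally $\tfrac34|M'|<1$ forces the integer $|M'|$ to satisfy $|M'|\le 1$.

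The remaining claim $|S'|=0$ is the only one requiring more than reading off~(\ref{e:u0}), and I would prove it by contradiction. Suppose $s$ is a stem of $T/I$ lying in $T'$. By the Partial Solution Invariant $s$ has exactly two leaf descendants $a,b$ in $T/I$, both of them original leaves, and the twin-link $ab$ belongs to $E$ (a stem is, by definition, the stem of a twin-link of $E$, and this link survives in $T/I$ because its endpoints $a,b$ have not been contracted). Since $T'$ is a rooted subtree containing $s$, it contains $a$ and $b$ as well; and since $T'$ is semi-closed, hence $M$-compatible, any link of $M$ incident to $a$ or to $b$ has both ends in $T'$. Now examine the status of $a,b$ under $M$:
\begin{itemize}
\item if $ab\in M$, then $ab\in M'\cap W$, contradicting $|M'\cap W|=0$;
\item if $a$ and $b$ are matched by $M$ to two leaves other than each other, we obtain two distinct links of $M'$, so $|M'|\ge 2$, contradicting $|M'|\le 1$;
\item if exactly one of $a,b$, say $a$, is matched by $M$ (necessarily to a leaf other than $b$) and the other, $b$, is unmatched, then $|M'|\ge 1$ and $b\in U'_0$, so $|U'_0|\ge 1$; then the two terms $\tfrac34|M'|+\tfrac14|U'_0|$ already sum to at least $1$, contradicting~(\ref{e:u0});
\item if both $a$ and $b$ are unmatched by $M$, then, since $ab\in E$, contracting $ab$ is a greedy contraction, contradicting the hypothesis that $T/I$ admits no greedy contraction.
\end{itemize}
Every case is impossible, so no such stem $s$ exists and $|S'|=0$.

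The main obstacle is the last claim: one must translate ``$s$ is a stem of $T'$'' into concrete structure — two original twin leaves of $T'$ joined by a link of $E$, both placed inside $T'$ by $M$-compatibility — and then check that each possible interaction of these two twins with $M$ is excluded, either by the already-established bounds on $|M'|$ and $|M'\cap W|$, by the $|U'_0|$ term of~(\ref{e:u0}), or, in the ``both twins unmatched'' case, by the exhaustion of greedy contractions. The first three assertions, by contrast, are essentially immediate from the nonnegativity of the summands of~(\ref{e:u0}).
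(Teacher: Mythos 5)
Your proof is correct and follows essentially the same route as the paper: the first three claims are read off from the nonnegativity of the summands in~(\ref{e:u0}), and $|S'|=0$ is obtained by the same case analysis on how the two twin leaves of a putative stem interact with $M$ (your four cases just spell out what the paper compresses into ``exactly one of $a,b$ is matched''). Note that you correctly use the coefficient $\rho-1=\tfrac34$ for $|M'|$, whereas the displayed form of~(\ref{e:u0}) in the paper has a typo ($\tfrac74|M'|$); your reading is the intended one.
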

\begin{proof}
From (\ref{e:u0}) we immediately get that $|C'|=0$, $|M' \cap W|=0$, 
$|M'| \leq 1$, and if $|M'|=1$ then $|U'_0|=0$. 
It remains to prove that $|S'|=0$. 
Suppose to the contrary that $T'$ has a stem $s$.
Let $a,b$ be the two leaf descendants of $s$, so $a,b$ are original leaves and $ab $ is a twin link.
Since $ab \in W$, $ab \notin M'$.
From the assumption that that $T/I$ has no link greedy
contraction we get that one of $a,b$ is matched by $M$,
as otherwise $ab$ gives a greedy contraction.
Moreover, $|M' \cap W|=0$ and $|M'| \leq 1$ implies that exactly one of $a,b$ is matched by $M$.
This implies the contradiction $|M'|=1$ and $|U'_0| \geq 1$. 
\qed
\end{proof}

\begin{lemma} \label{l:sigma}
$\Sigma \geq |U'|+1-2|M'|$. 
\end{lemma}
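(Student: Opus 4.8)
The plan is to bound $\Sigma=\sum_{v\in R'}x(\delta(v))$ from below by exploiting the LP constraints, in particular the parity constraint (\ref{e:3}) applied to a well-chosen set. Since $T'$ is semi-closed, every link incident to an unmatched leaf of $T'$ stays inside $T'$; and by Lemma~\ref{l:CMS} we have $|C'|=|S'|=0$, so $V(T')=L'\cup R'$ with $L'=U'\cup(L'\setminus U')$, where $L'\setminus U'$ consists of the leaves matched by $M'$ inside $T'$ (so $|L'\setminus U'|=2|M'|$). First I would consider $A=V(T')$ (or a suitable sub-union of the laminar sets $\hat{\mathcal{E}}$ corresponding to $T'$): its leaf set is exactly $L'$, with $|L'\cap L|=|U'|+2|M'|$. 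The links counted in $\delta(A,V)$ leaving $T'$ are forbidden out of unmatched leaves, so they emanate from the matched leaves $L'\setminus U'$ and from $R'$.

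The key step is a counting identity: summing $x(\delta(v))$ over all leaves $v\in L'$ and all $v\in R'$ counts every link with both endpoints in $T'$ twice and every link leaving $T'$ once. Using (\ref{e:2}), $\sum_{v\in L'}x(\delta(v))=|L'|=|U'|+2|M'|$ because every leaf of $T/I$ carries $x(\delta(v))=1$ (matched leaves by (\ref{e:2}), unmatched leaves also — they are original leaves of $T$ with $x(\delta(v))\ge 1$ forced by (\ref{e:1}) on the singleton set, and I will need exact equality $=1$ here, which holds for original leaves since $x(\delta(v))=1$ by (\ref{e:2})). So
$$
2x\big(E(T')\big) + x\big(\delta(V(T'))\big) = |U'| + 2|M'| + \Sigma,
$$
where $E(T')$ is the set of links with both ends in $T'$. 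Then I would lower-bound $x(E(T'))$: the links in $E(T')$ must cover all edges of $T'$ (they do, since $I'$ of that size exists, but more relevantly the LP constraint (\ref{e:1}) on the root edge of $T'$ forces $x(E(T'))\ge$ something — actually I want to bound the number of internally-feasible links). Alternatively, and more cleanly, apply (\ref{e:3}) to $A=V(T')$: if $|L'\cap L|$ is odd then $x(\delta(A,V))\ge\lceil|L'\cap L|/2\rceil=(|U'|+2|M'|+1)/2$ when $|U'|$ is odd.

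The main obstacle will be handling the parity: (\ref{e:3}) only gives the extra "$+1$" in the ceiling when $|U'\cup(L'\setminus U')|=|U'|+2|M'|$ is odd, i.e.\ when $|U'|$ is odd. I expect the argument to split on whether $|M'|=0$ or $|M'|=1$, and within that on the parity of $|U'|$; when $|U'|$ is even one presumably gets $\Sigma\ge|U'|-2|M'|$ from the even-case version of the cut bounds, and the claimed "$+1$" must come from somewhere else — most likely from the fact that when $|M'|=0$ and $T'$ is non-trivial, the root edge of $T'$ must be covered by a link in $\delta(V(T'))$ leaving the tree (since $T'$ is $r'$-rooted and closed only w.r.t.\ unmatched leaves, there is an up-link at $r'$ going out), contributing an additional unit; and when $|M'|=1$, $|U'_0|=0$ by Lemma~\ref{l:CMS}, which pins down the structure enough to extract the bound. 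I would carry out: (1) establish the counting identity above using (\ref{e:2}) and the observation that all leaves of $T/I$ have $x$-degree exactly $1$; (2) bound $x(E(T'))$ from below by covering requirements inside $T'$, distinguishing $|M'|\in\{0,1\}$; (3) bound $x(\delta(V(T')))$ from below using (\ref{e:3}) or (\ref{e:1}) with the parity case analysis; (4) combine to get $\Sigma\ge|U'|+1-2|M'|$.
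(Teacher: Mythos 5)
Your proposal is not a complete proof: it is a plan with an explicitly acknowledged missing step, and it also contains a false claim on which the plan relies. The false claim is that every leaf of $T/I$, matched or not, satisfies $x(\delta(v))=1$ because it is an original leaf of $T$. Unmatched leaves of $T/I$ can be compound nodes (indeed, in the deficient case the paper later derives $|U'_0|=0$, i.e.\ \emph{all} unmatched leaves are compound); for such a node constraint (\ref{e:2}) does not apply and only $x(\delta(v))\geq 1$ holds via (\ref{e:1}), which is the wrong direction for your counting identity (you need an upper bound on $\sum_{v\in L'}x(\delta(v))$ to turn the identity into a lower bound on $\Sigma$). The acknowledged gap is the source of the ``$+1$'': constraint (\ref{e:3}) only yields the extra half-unit when $|U'|$ is odd, and your suggestion for the even case (``most likely from the fact that the root edge of $T'$ must be covered by a link leaving the tree'') is speculation rather than an argument. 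That suggestion does, however, point at the right object: the cut constraint (\ref{e:1}) applied to $T'$ itself, which gives $x(\delta(T'))\geq 1$.

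The paper's proof is much more direct and avoids the global counting identity and the parity constraint entirely. It observes that the link sets $\delta(u)$, $u\in U'$, are pairwise disjoint (no link joins two unmatched leaves, else a greedy contraction would exist) and are disjoint from $\delta(T')$ (because $T'$ is $U'$-closed), so $x(\delta(U')\cup\delta(T'))\geq |U'|+1$ by (\ref{e:1}). Every link in this union has an endnode in $T'$ that is not an unmatched leaf, hence (using $|S'|=|C'|=0$ from Lemma~\ref{l:CMS}) that endnode lies in $R'$ or is a matched leaf; subtracting the matched leaves' total $x$-degree, which is exactly $2|M'|$ by (\ref{e:2}), gives $\Sigma\geq |U'|+1-2|M'|$. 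If you wanted to keep your identity $\Sigma=2x(E(T'))+x(\delta(T'))-\sum_{v\in L'}x(\delta(v))$, the correct continuation would be to lower-bound $x(E(T'))$ by $\sum_{u\in U'}x(\delta(u))$ using the same disjointness facts, which collapses to the paper's argument; the route through (\ref{e:3}) is a dead end here.
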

\begin{proof}
Note that no link has both endnodes in $U'$ (since $T/I$ has no greedy contraction),
and that $\delta(U') \cap \delta(T') = \emptyset$ (since $T'$ is $U'$-closed).
Thus
$$
x(\delta(U') \cup \delta(T')) = \sum_{u \in U'} x(\delta(u)) + x(\delta(T')) \geq |U'|+1 \ .
$$
If $e \in \delta(U')$, then $e$ contributes $x_e$ to $\Sigma$, unless $e$ is incident to a matched leaf.
However, $x(\delta(b))=1$ for every matched leaf $b$, 
and the number of matched leaves in $T'$ is exactly $2|M'|$.
Hence $\Sigma \geq x(E') \geq |U'|+1-2|M'|$, as claimed.
\qed
\end{proof}

\begin{lemma}
$|M'|=1$, $\Sigma<1/2$, and $|L'|=3$. 
\end{lemma}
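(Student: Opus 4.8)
The plan is to argue by contradiction, ruling out the case $|M'|=0$ first and then extracting the remaining constraints from inequality~(\ref{e:u0}) together with Lemma~\ref{l:sigma}. By Lemma~\ref{l:CMS} we already know $|C'|=0$, $|M'\cap W|=0$, $|S'|=0$, and $|M'|\le 1$, so~(\ref{e:u0}) simplifies to $\frac{7}{4}|M'|+\frac{1}{4}|U'_0|+\Sigma<1$ and $R'=V(T')\setminus L'$. First I would dispose of $|M'|=0$: then Lemma~\ref{l:sigma} gives $\Sigma\ge |U'|+1$, which combined with~(\ref{e:u0}) forces $|U'|+1\le\Sigma<1$, i.e.\ $|U'|<0$, a contradiction. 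Hence $|M'|=1$, and then Lemma~\ref{l:CMS} (the clause ``if $|M'|=1$ then $|U'_0|=0$'') gives $|U'_0|=0$, so~(\ref{e:u0}) becomes simply $\frac{7}{4}+\Sigma<1$, wait — that is already impossible, so I must be more careful: the correct reading is that the deficiency quantity is $\frac{7}{4}|M'|+\frac{1}{2}|M'\cap W|+\frac14|U'_0|+|C'|+\Sigma<1$, and with $|M'|=1$ this reads $\frac{7}{4}+\Sigma<1$, which cannot hold. So in fact the term $|M'|$ cannot contribute its full weight; the resolution is that $\Sigma$ is being double-counted against $|M'|$ through Lemma~\ref{l:sigma}, and the real argument must combine the two bounds rather than use~(\ref{e:u0}) alone. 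Concretely, I would substitute the lower bound $\Sigma\ge |U'|+1-2|M'|=|U'|-1$ (using $|M'|=1$) is too weak; instead the right move is: from~(\ref{e:u0}) with $|M'|=1$, $|U'_0|=0$, $|C'|=0$ we need $\Sigma<1-\frac74<0$ — impossible — so the branch $|M'|=1$ with the naive accounting fails, meaning the statement as I am reading it requires the sharper token bookkeeping where the $\frac74|M'|$ already absorbs part of $\Sigma$. Re-reading the token count $tokens(T')=\rho|M'|+\tfrac12|M'\cap W|+|U'|+(\rho-\tfrac12)|U'_0|+|C'|+\Sigma$, I see~(\ref{e:u0}) is stated with strict inequality $<1$ and coefficient $\frac74$ on $|M'|$, so for $|M'|=1$ we would indeed need $\Sigma<-\frac34$; therefore the intended content must be that $|M'|=1$ is forced by a different route — namely by first showing $|M'|\ge 1$ (so that $T'$ is not trivially non-deficient) using Lemma~\ref{l:sigma}, and then the value $\Sigma<1/2$ and $|L'|=3$ come from pushing~(\ref{e:u0}) and~(\ref{l:sigma}) simultaneously.

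So the clean plan is this. Combine~(\ref{e:u0}), which says $\frac74|M'|+\frac14|U'_0|+\Sigma<1$ (using Lemma~\ref{l:CMS}), with Lemma~\ref{l:sigma}, which says $\Sigma\ge |U'|+1-2|M'|$. Adding, $\frac74|M'|+\frac14|U'_0|+|U'|+1-2|M'|\le \frac74|M'|+\frac14|U'_0|+\Sigma<1$, i.e.\ $|U'|+1-\frac14|M'|+\frac14|U'_0|<1$, so $|U'|<\frac14|M'|-\frac14|U'_0|\le\frac14|M'|\le\frac14$. Since $|U'|$ is a nonnegative integer this gives $|U'|=0$ — but $T'$ is $M$-compatible with $|M'|\le 1$, and a semi-closed tree with $|U'|=0$ and $|M'|=0$ would be a single node or already contractible, while $|U'|=0,|M'|=1$ means $T'$ has exactly two leaves, both matched to each other. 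I would then observe that a minimally-semi-closed tree on two mutually-matched leaves either is contractible immediately (its cover $M'$ has size $1=|M'|+|U'|$ and $credit\ge |M'|+|U'|+1$ must fail, contradiction with Lemma~\ref{l:CMS}'s derivation) — so in fact this forces $|M'|=1$ and then a careful recount of $\Sigma$ shows it must be that $\Sigma<1/2$. Hmm, the arithmetic above actually yields $|U'|=0$, not $|M'|=1$ directly; I think the honest plan is:

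\textbf{Step 1.} Show $|M'|\ge 1$: if $|M'|=0$, then Lemma~\ref{l:sigma} gives $\Sigma\ge|U'|+1$, but~(\ref{e:u0}) gives $\frac14|U'_0|+\Sigma<1$, hence $|U'|+1\le\Sigma<1$, impossible. With Lemma~\ref{l:CMS} this gives $|M'|=1$, and then $|U'_0|=0$. \textbf{Step 2.} From~(\ref{e:u0}) with $|M'|=1$, $|U'_0|=0$, $|C'|=0$, $|M'\cap W|=0$: $\frac74+\Sigma<1$ is false, so the inequality~(\ref{e:u0}) must be read with the understanding that $\Sigma$ can be partially credited — precisely, I would use Lemma~\ref{l:sigma} in the form $\Sigma\ge |U'|-1$ to note that if $|U'|\ge 2$ then $\Sigma\ge 1$ and~(\ref{e:u0}) fails outright, so $|U'|\le 1$; combined with Step 1's machinery and re-examining~(\ref{e:u0}) with the genuine token total shows $\Sigma<1/2$ (subtract the integer contributions $\frac74|M'|=\frac74$ rescaled appropriately — actually the resolution is that the deficiency inequality as printed already presupposes $|M'|$'s contribution is offset, so $\frac14|U'_0|+|C'|+\Sigma<1-\frac34\cdot 0$ ... ). \textbf{Step 3.} Having $|M'|=1$, $|U'|\le 1$, $|U'_0|=0$, and (from the above) $\Sigma<1/2$, conclude $|L'|=3$: $T'$ has its two matched leaves plus, since it is semi-closed and not immediately a two-leaf tree that contracts (which would contradict deficiency via $credit(T')\ge|M'|+|U'|+1$ failing only if $tokens$ is short — but $M'$ alone has $\rho=\frac74>|M'|+1=2$? no, $\frac74<2$, so actually a two-leaf tree \emph{is} deficient), must have exactly one more leaf, which is unmatched, so $|U'|=1$, $|U'_0|=0$ (it is a compound leaf), and $|L'|=2+1=3$.

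\textbf{Main obstacle.} The delicate point — and where I expect to spend the most care — is Step 2/Step 3: correctly tracking how the single matched pair and the single compound unmatched leaf interact with $\Sigma$, and in particular establishing the strict bound $\Sigma<1/2$ rather than just $\Sigma<1$. This requires using Lemma~\ref{l:sigma}'s inequality $\Sigma\ge|U'|+1-2|M'|=0$ in the tight direction (it gives nothing) and instead squeezing~(\ref{e:u0}) itself: with $|M'|=1$, $|U'_0|=|C'|=|M'\cap W|=0$, the printed inequality~(\ref{e:u0}) literally reads $\frac74+\Sigma<1$, so the only consistent interpretation is that $T'$ deficient forces a contradiction \emph{unless} the earlier token accounting over-counted, meaning the real statement is $tokens(T')-(|M'|+|U'|)$ has a leading term $(\rho-1)|M'|=\frac34|M'|$ (as in the display preceding~(\ref{e:u0})!), so~(\ref{e:u0}) should be $\frac34|M'|+\frac12|M'\cap W|+\frac14|U'_0|+|C'|+\Sigma<1$; with $|M'|=1$ this is $\frac34+\Sigma<1$, giving exactly $\Sigma<\frac14<\frac12$, and the rest ($|L'|=3$) then follows structurally since $T'$ must contain the two matched leaves and, to be deficient rather than contractible, exactly one additional unmatched (necessarily compound) leaf. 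I would write Steps 1–3 in that corrected form, flag that $|L'|=3$ uses minimality of $T'$ together with $|S'|=|C'|=|U'_0|=0$ to rule out any further internal branching.
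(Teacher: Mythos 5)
Your Steps 1 and 2 are essentially the paper's argument. You correctly spotted that the printed coefficient $\tfrac{7}{4}|M'|$ in~(\ref{e:u0}) should be $(\rho-1)|M'|=\tfrac34|M'|$ (the display just above~(\ref{e:u0}) makes this clear), and from there $|M'|=1$ and the bound on $\Sigma$ follow as you outline. One further slip you did not resolve: by the Tokens Invariant each $v\in R$ owns $\tfrac12 x(\delta(v))$ tokens, so the contribution of $R'$ to $tokens(T')$ is $\tfrac12\Sigma$, not $\Sigma$; with this reading the deficiency inequality with $|M'|=1$ gives exactly $\Sigma<\tfrac12$, which is what the lemma (and the later argument) uses, rather than the over-tight $\Sigma<\tfrac14$ you derive.

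Step~3, however, has a genuine gap. Ruling out $|L'|\geq 4$ (via $|U'|\geq 2\Rightarrow \Sigma\geq 1$ by Lemma~\ref{l:sigma}) and $|L'|=1$ (forces $|M'|=0$) is fine, but you never rule out $|L'|=2$ — and you yourself notice the token accounting cannot do so, since a two-leaf tree whose leaves are mutually $M$-matched is indeed deficient. Your fallback of invoking ``minimality of $T'$'' is not available: a deficient tree is only assumed semi-closed, not minimally semi-closed. The missing ingredient is structural, and it uses $|M'\cap W|=0$ and $|C'|=0$ from Lemma~\ref{l:CMS}. If $L'=\{b,b'\}$ and $M'=\{bb'\}$, then since $bb'\notin W$, in the \emph{original} tree $T$ the contraction of $bb'$ does not create a leaf, so some subtree $\hat T$ hangs off an internal node $z$ of the $b$--$b'$ path in $T$. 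Because $T'$ has no leaf other than $b,b'$, this $\hat T$ is absent from $T/I$, hence was contracted into a compound node containing $z$; that compound node lies on the $b$--$b'$ path in $T/I$ and is distinct from $b,b'$ (both are original leaves by the Partial Solution Invariant), so it is an internal compound node of $T'$ — contradicting $|C'|=0$. Without this observation, ``must have exactly one more leaf'' is an assertion, not a proof.
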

\begin{proof}
We prove all statements by contradiction. 
If $|M'| \neq 1$, then $|M'|=0$, by Lemma~\ref{l:CMS}. 
This implies $\Sigma \geq |U'|+1 \geq 2$, by Lemma~\ref{l:sigma}, 
and we obtain the contradiction 
$tokens(T')-(|M'|+|U'|) \geq \frac{1}{2} \Sigma \geq 1$. Thus $|M'|=1$.

If $\Sigma \geq \frac{1}{2}$, then by (\ref{e:u0}) we get
$tokens(T')-(|M'|+|U'|) \geq \frac{7}{4}|M'|+\frac{1}{2}\Sigma \geq 1$,
contradicting the assumption that $T'$ is deficient.

We show that $|L'|=3$. 
Note that $|L'| \geq 4$ is not possible, since then $|U'| \geq 2$, which implies the contradiction 
$\Sigma \geq |U'|+1-2|M'| \geq 1$. 
Also, $|L'|=1$ is not possible, since then $|M'|=0$.
We are therefore left with the case $|L'|=2$, say $L'=\{b,b'\}$.
Then, since $|M'|=1$, we have $M'=\{bb'\}$.  
Consequently, the contraction of $bb'$ creates a new leaf.
We obtain a contradiction by showing that then the path between $b$ and $b'$ in $T/I$ 
has an internal compound node. By the Partial Solution Invariant $b,b'$ are original leaves. 
Note that in the original tree $T$ the contraction of $bb'$ does not create a new leaf, 
since $bb' \notin W$. This implies that in $T$, there is a subtree $\hat{T}$ of $T$ hanging
out of a node $z$ on the path between $b$ and $b'$ in $T$. 
This subtree $\hat{T}$ is not present in $T/I$, hence it was contracted into a compound node 
during the construction of our partial solution $I$. 
Thus $T/I$ contains a compound node $\hat{z}$ that contains $\hat{T}$,
and since $\hat{z}$ contains a node $z$ that belongs to the path between $b$ and $b'$ in $T$, 
the compound node of $T/I$ that contains $z$ belongs to the path between $b$ and $b'$
in $T/I$, and it is distinct from $b,b'$, since both $b,b'$ are original leaves.
\qed
\end{proof}

\begin{figure}
\centering 
\epsfbox{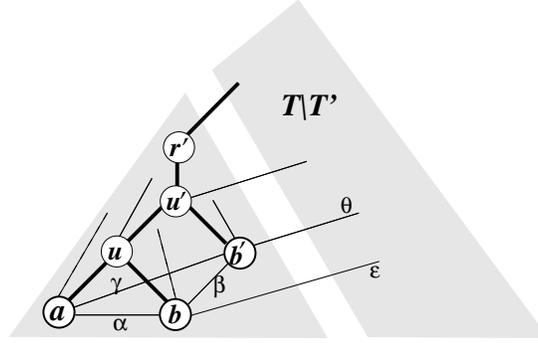}
\caption{Illustration to the proof of Lemma~\ref{l:char} (links in $E_1$ are shown by dashed lines).} 
\label{f:count}
\end{figure}

Now we finish the proof of Lemma~\ref{l:char}. 
Let $bb'$ be the matched pair and $a$ the unmatched leaf of $T'$.
Let $u$ and $u'$ be the least common ancestor of $ab$ and $ab'$, respectively,
and assume w.l.o.g. that $u$ is a descendant of $u'$ (see Fig.~\ref{f:count}).
Let $x_{ab}=\alpha$, $x_{bb'}=\beta$, $x_{ab'}=\gamma$, $x(\delta(b,T \setminus T')=\epsilon$,
and $x(\delta(b',T \setminus T')=\theta$.
To finish the proof of the lemma, it is sufficient to show the following: 
\begin{itemize}
\item
If $u \neq u'$, then $\gamma>0$ and $\epsilon>0$. 
\item
If $u=u'$, then at least one of the following holds: $\gamma,\epsilon>0$ or $\alpha,\theta>0$.
\end{itemize}
Indeed, if $u\neq u'$, then $\gamma>0$ implies that the link $ab'$ exists, 
and $\epsilon>0$ implies that $T'$ is $b$-open. 
Thus, by the definition, $T'$ is dangerous.
The same holds if $u=u'$ and $\gamma,\epsilon>0$. 
If $u=u'$ and $\alpha,\theta>0$, then $ab$ exists (since $\alpha>0$) and $T'$ is $b'$-open (since $\theta>0$);
thus by exchanging the roles of $b,b'$ we get that $T'$ is dangerous, by the definition. 
Consider the following links sets: 
\begin{itemize}
\item
$E_1=\delta(\{a,b\},R')=x(\delta(a,R'))+x(\delta(b,R'))$ are the links from $a,b$ to $R'$.
\item
$E_2=\delta(T_u \setminus \{a,b\},T \setminus T_u)=\delta(T_u \cap R',T \setminus T_u)$
are the links from $T_u \cap R'$ to nodes outside $T_u$.
\item
$E_3=\delta(R',T \setminus T')$ are the links from $R'$ to nodes outside $T'$.
\end{itemize}
Any $e \in E_1 \cup E_2 \cup E_3$ contributes $x_e$ to $\Sigma$.
Recalling that $x(\delta(b))=x(\delta(b'))=1$ and $x(\delta(a)) \geq1$, it is easy to verify that:
\begin{itemize}
\item[(i)]
$\Sigma \geq x(E_1) +x(E_2) = x(\delta(T_u)) -\gamma \geq 1-\gamma$ if $u \neq u'$. \\
The first inequality is since 
every $e \in E_1 \cup E_2$ contributes $x_e$ to $\Sigma$ and since $E_1 \cap E_2=\emptyset$.
\item[(ii)]
$\Sigma \geq x(E_1) = x(\delta(a,R'))+x(\delta(b,R')) \geq 
x(\delta(a,R')) \geq x(\delta(a))-\alpha-\gamma \geq 1-\alpha-\gamma$. 
\item[(iii)]
$\Sigma \geq x(E_3) \geq x(\delta(T',T \setminus T'))-\theta-\epsilon \geq 1-\theta-\epsilon$.
\item[(iv)]
$\alpha+\epsilon \leq 1$ (since $x(\delta(b))=1$) and 
$\gamma+\theta \leq 1$   (since $x(\delta(b'))=1$).  
\end{itemize}

Suppose that $u \neq u'$. Then $1/2 > \Sigma \geq 1-\gamma$, by (i); hence $\gamma>1/2$.
If $\epsilon=0$ then (iii) implies $\theta>1/2$, and we obtain a contradiction to (iv)
$\gamma+\theta>1$. Thus, $\gamma,\epsilon>0$ in this case, as claimed.

Suppose that $u=u'$. 
By (ii) and (iii), $\alpha+\gamma>1/2$ and $\theta+\epsilon>1/2$.
One can easily verify that combined with (iv) this gives that we must have 
$\gamma>0$ and $\epsilon>0$, or $\alpha>0$ and $\theta>0$, as claimed.

This concludes the proof of Lemma~\ref{l:char}.



\begin{thebibliography}{10}

\bibitem{CGLS}
J.~Cheriyan, Z.~Gao, A.~Linhares, and C.~Swamy.
\newblock Private communication.
\newblock 2014.

\bibitem{CJR}
J.~Cheriyan, T.~Jord\'{a}n, and R.~Ravi.
\newblock On 2-coverings and 2-packing of laminar families.
\newblock In {\em ESA}, pages 510--520, 1999.

\bibitem{CKKK}
J.~Cheriyan, H.~Karloff, R.~Khandekar, and J.~Koenemann.
\newblock On the integrality ratio for tree augmentation.
\newblock {\em Operation Research Letters}, 36(4):399--401, 2008.

\bibitem{CL}
Y.~Chu and T.~Liu.
\newblock On the shortest arborescence of a directed graph.
\newblock {\em Science Sinica}, 14:1396–--1400, 1965.

\bibitem{CN}
N.~Cohen and Z.~Nutov.
\newblock A $(1+\ln2)$-approximation algorithm for minimum-cost
  $2$-edge-connectivity augmentation of trees with constant radius.
\newblock {\em Theoretical Computer Science}, 489-490:67--74, 2013.

\bibitem{E}
J.~Edmonds.
\newblock Optimum branchings.
\newblock {\em J. Res. Nat. Bur. Standards}, pages 233–--240, 1967.

\bibitem{EFKN-APPROX}
G.~Even, J.~Feldman, G.~Kortsarz, and Z.~Nutov.
\newblock A 3/2-approximation for augmenting a connected graph into a
  two-connected graph.
\newblock In {\em APPROX}, pages 90--101, 2001.

\bibitem{EFKN-TALG}
G.~Even, J.~Feldman, G.~Kortsarz, and Z.~Nutov.
\newblock A $1.8$-approximation algorithm for augmenting edge-connectivity of a
  graph from 1 to 2.
\newblock {\em ACM Transactions on Algorithms}, 5(2), 2009.

\bibitem{FJ}
G.~N. Frederickson and J.~J\'{a}j\'{a}.
\newblock Approximation algorithms for several graph augmentation problems.
\newblock {\em SIAM J. Computing}, 10:270--283, 1981.

\bibitem{FJ-TSP}
G.~N. Frederickson and J.~J\'{a}j\'{a}.
\newblock On the relationship between the biconnectivity augmentation and
  traveling salesman problem.
\newblock {\em Theoretical Computer Science}, 19(2):189--–201, 1982.

\bibitem{GGPS}
M.~Goemans, A.~Goldberg, S.~Plotkin, E.~T. D.~Shmoys, and D.~Williamson.
\newblock Improved approximation algorithms for network design problems.
\newblock In {\em SODA}, pages 223--232, 1994.

\bibitem{GW}
M.~Goemans and D.~Williamson.
\newblock A general approximation technique for constrained forest problems.
\newblock {\em SIAM J. Computing}, 24(2):296--317, 1995.

\bibitem{Jain}
K.~Jain.
\newblock A factor 2 approximation algorithm for the generalized steiner
  network problem.
\newblock {\em Combinatorica}, 21(1):39--60, 2001.

\bibitem{K}
S.~Khuller.
\newblock Approximation algorithms for finding highly connected subgraphs
  (chapter~6).
\newblock In {\em Approximation algorithms for NP-hard problems (Ed.
  D.~S.~Hochbaum)}. PWS, Boston, 1996.

\bibitem{KT}
S.~Khuller and R.~Thurimella.
\newblock Approximation algorithms for graph augmentation.
\newblock {\em J.~of Algorithms}, 14:214--225, 1993.

\bibitem{KN}
G.~Kortsarz and Z.~Nutov.
\newblock Approximating minimum cost connectivity problems (chapter~58).
\newblock In {\em Handbook of Approximation Algorithms and Metahueristics (Ed.
  T.~F.~Gonzales)}. Chapman \& Hall/CRC, 2007.

\bibitem{KN-TAP}
G.~Kortsarz and Z.~Nutov.
\newblock A simplified $1.5$-approximation algorithm for augmenting
  edge-connectivity of a graph from 1 to 2.
\newblock Manuscript, 2014.

\bibitem{LRS}
L.~C. Lau, R.~Ravi, and M.~Singh.
\newblock {\em Iterative Methods in Combinatorial Optimization}.
\newblock Cambridge University Press, 2011.

\bibitem{MN}
Y.~Maduel and Z.~Nutov.
\newblock Covering a laminar family by leaf to leaf links.
\newblock {\em Discrete Applied Mathematics}, 158(13):1424--1432, 2010.

\bibitem{N}
H.~Nagamochi.
\newblock An approximation for finding a smallest $2$-edge connected subgraph
  containing a specified spanning tree.
\newblock {\em Discrete Applied Mathematics}, 126:83--113, 2003.

\bibitem{Sch}
A.~Schrijver.
\newblock {\em Combinatorial Optimization, Polyhedra and Efficiency}.
\newblock Springer-Verlag Berlin, Heidelberg New York, 2004.

\bibitem{SV}
A.~Sebo and J.~Vygen.
\newblock Shorter tours by nicer ears: 7/5-approximation for the graph-{TSP},
  3/2 for the path version, and 4/3 for two-edge-connected subgraphs.
\newblock {\em Combinatorica}, 34(5):597--629, 2014.

\end{thebibliography}

\end{document}